\newcommand{\method}{\textsc{SliceNDice}\xspace}
\newcommand{\seedMethod}{\textsc{GreedySeed}\xspace}
\newcommand{\methodT}{{\Large\textsc{SliceNDice}}\xspace}
\newcommand{\mass}{\texttt{Mass}\xspace}
\newcommand{\avgdeg}{\texttt{AvgDeg}\xspace}
\newcommand{\dens}{\texttt{Dens}\xspace}
\newcommand{\singval}{\texttt{SingVal}\xspace}
\newcommand{\cssusp}{\texttt{CSSusp}\xspace}
\definecolor{darky}{RGB}{178, 154, 57
}
\definecolor{darkg}{RGB}{52, 158, 85}
\newcommand{\shuffle}{\textsc{Shuffle}\xspace}
\newcommand{\choice}{\textsc{Choose}\xspace}
\newcommand{\seednodes}{\textsc{SeedNodes}\xspace}
\newcommand{\seedviews}{\textsc{SeedViews}\xspace}
\newcommand{\updatenodes}{\textsc{UpdateNodes}\xspace}
\newcommand{\updateviews}{\textsc{UpdateViews}\xspace}
\algnewcommand{\algorithmicgoto}{\textbf{go to}}%
\algnewcommand{\Goto}[1]{\algorithmicgoto~\ref{#1}}%
\newcommand{\mvere}{\texttt{MVERE}\xspace}
\newcommand{\tick}{\textcolor{darkg}{\ding{52}}}
\newcommand{\cross}{\textcolor{red}{\ding{56}}}
\newtheorem{problem}{Problem}
\newtheorem{lemma}{Lemma}
\newtheorem{axiom}{Axiom}
\newtheorem{defn}{Definition}
\newtheorem*{axiomnonumber}{Axiom}
\newcommand{\set}[1]{
 \mathcal{#1}
}
\newcommand{\fun}[1]{
 \mathscr{#1}
}
\def\Rho{P}
\def\BibTeX{{\rm B\kern-.05em{\sc i\kern-.025em b}\kern-.08em
    T\kern-.1667em\lower.7ex\hbox{E}\kern-.125emX}}
\begin{document}

\title{\method: Mining Suspicious Multi-attribute Entity Groups with Multi-view Graphs}
\author{
\IEEEauthorblockN{Hamed Nilforoshan \thanks{hiiiii}}
\IEEEauthorblockA{\textit{Columbia University} \\
New York City, NY \\
hn2284@columbia.edu}
\and
\IEEEauthorblockN{Neil Shah}
\IEEEauthorblockA{\textit{Snap Inc.} \\
Santa Monica, CA \\
nshah@snap.com}
}
\maketitle

\begin{abstract}
Given the reach of web platforms, bad actors have considerable incentives to manipulate and defraud users at the expense of platform integrity.  This has spurred research in numerous suspicious behavior detection tasks, including detection of sybil accounts, false information, and payment scams/fraud.  In this paper, we draw the insight that many such initiatives can be tackled in a common framework by posing a detection task which seeks to find groups of entities which share too many properties with one another across multiple attributes (sybil accounts created at the same time and location, propaganda spreaders broadcasting articles with the same rhetoric and with similar reshares, etc.) 
Our work makes four core contributions: Firstly, we posit a novel \emph{formulation} of this task as a multi-view graph mining problem, in which distinct views reflect distinct attribute similarities across entities, and contextual similarity and attribute importance are respected.  Secondly, we propose a novel \emph{suspiciousness metric} for scoring entity groups given the abnormality of their synchronicity across multiple views, which obeys intuitive desiderata that existing metrics do not.  Finally, we propose the \method \emph{algorithm} which enables efficient extraction of highly suspicious entity groups, and demonstrate its \emph{practicality} in production, in terms of strong detection performance and discoveries on Snapchat's large advertiser ecosystem (89\% precision and numerous discoveries of real fraud rings), marked  outperformance of baselines (over 97\% precision/recall in simulated settings) and linear scalability.

\end{abstract}

\begin{IEEEkeywords}
anomaly detection, attributed data, multi-view graphs, outlier
\end{IEEEkeywords}

\section{Introduction}
\begin{figure*}[t!]
\centering
%
\subcaptionbox{Our formulation: multi-attributed entity data as a multi-view graph\label{fig:crown_a}}
{
\includegraphics[width=0.64\textwidth]{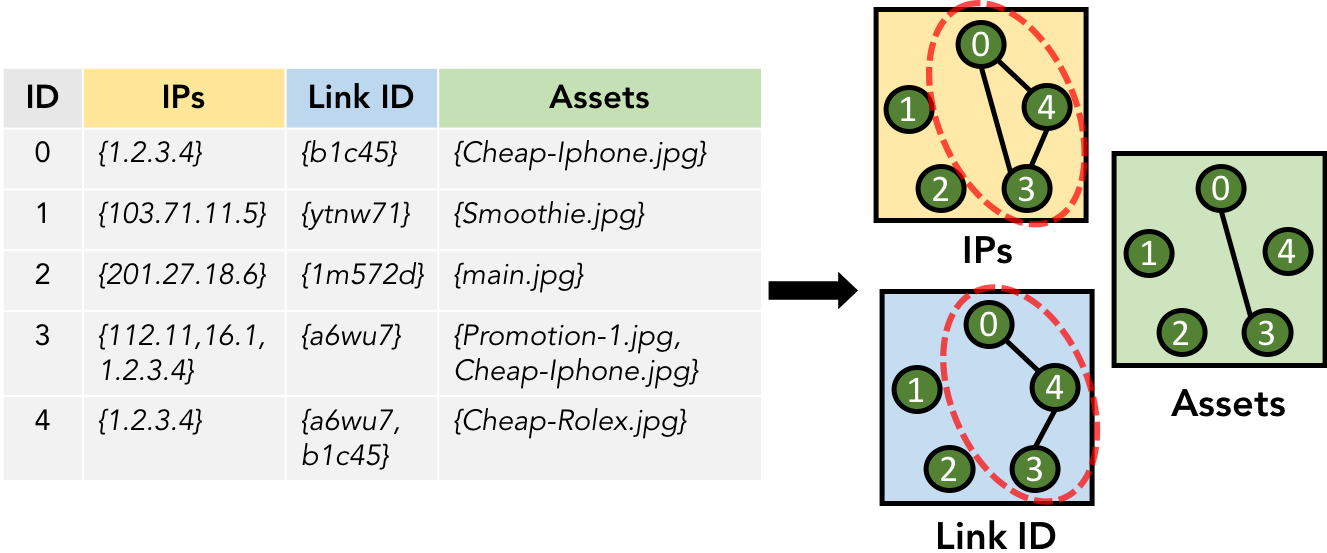}
}
\subcaptionbox{\method detected e-commerce fraud \label{fig:crown_b}}
{ 
\includegraphics[width=0.32\textwidth]{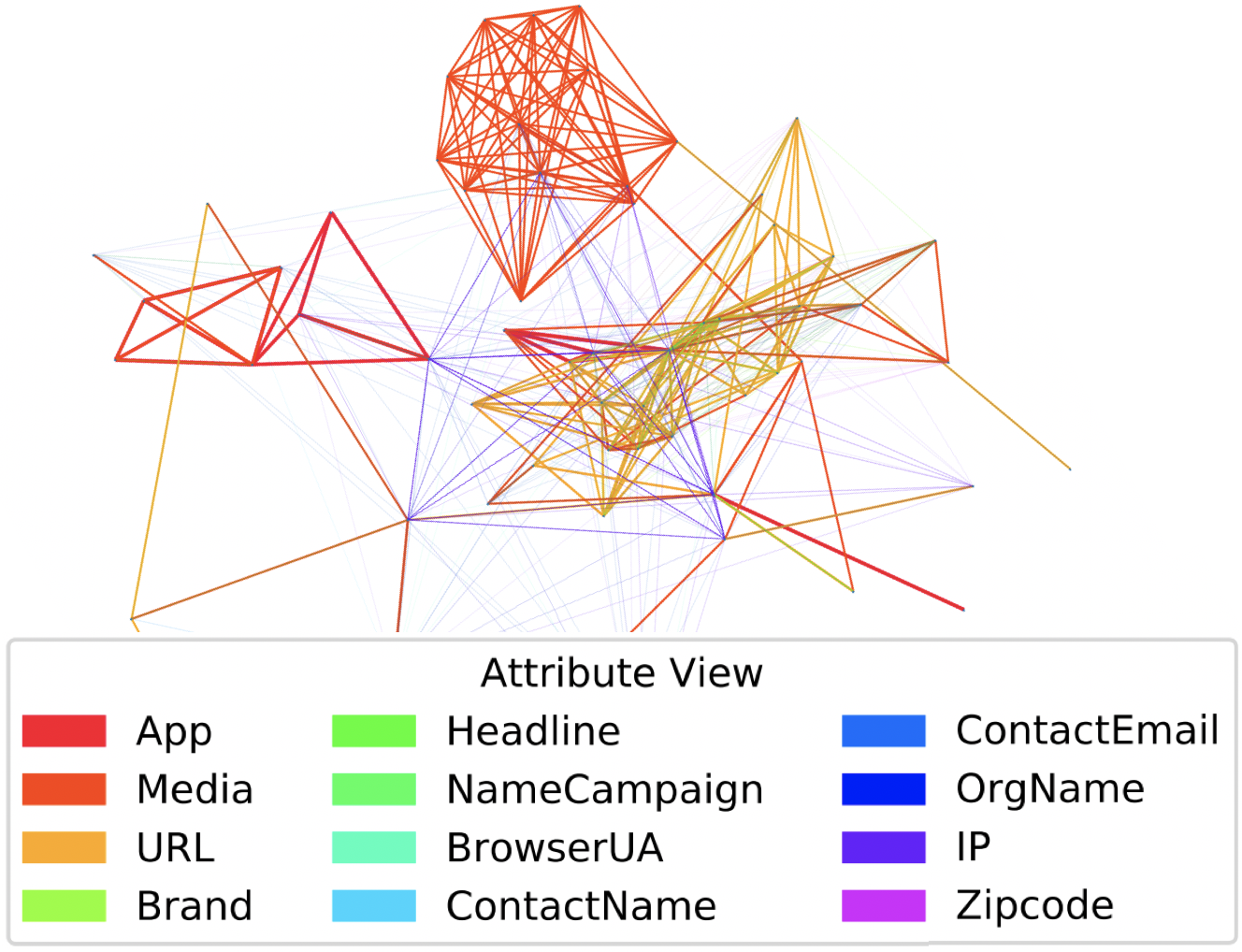}
}
\label{fig:ard}
\caption{Our work tackles suspicious behavior detection in multi-attribute entity data using a multi-view graph mining formulation, shown in (a).  Our \method algorithm offers scalable ranking and discovery of such behaviors across multiple entities and graph views, enabling us to uncover integrity violations like e-commerce fraud on the Snapchat ads platform, shown in (b).}
\label{fig:crown}
\end{figure*}

Online services and social networks (Snapchat, Quora, Amazon, etc.) have become common means of  engagement in human activities including socialization, information sharing and commerce. However, given their dissemination power and reach, they have been long-exploited by bad actors looking to manipulate user perception, spread misinformation, and falsely promote bad content. Such malicious activities are motivated by numerous factors including monetary incentives \cite{thomas2013trafficking} to personal \cite{smith2008cyberbullying} and political interests \cite{bessi2016social}.  
Tackling online misbehavior is a challenging research problem, given its variance with respect to manifestation across platforms, incentive structures and time.  Furthermore, online misbehavior is often adversarial in nature, and not solvable by systems which aim to cooperate with users to achieve positive outcomes~\cite{nilforoshan2018leveraging}.  Despite these challenges, researchers have made progress in a variety of scenarios including fake account creation \cite{cao2014uncovering, xiao2015detecting}, bot follower detection \cite{shah2014spotting, jiang2014catchsync}, malware detection \cite{chau2011polonium} and more.  However, many of these solutions require large, labeled datasets, and offer little to no extensibility.  Unfortunately, given the ever-increasing multitude of new abuse vectors, application-specific anti-abuse solutions and rich labeled sets are not always feasible or timely, motivating research towards flexible, unsupervised methods.

In this work, we propose an unsupervised solution for a wide-class of misbehavior problems in which we are tasked with discovering group-level suspicious behavior across attributed data. Our work is inspired by prior works, which have demonstrated that such behaviors often occur in lockstep and are better discernible on a group, rather than individual level \cite{kumar2018false, shah2017many}.   However, we tackle a problem which is left unaddressed by prior work: 
\begin{problem}[Informal]
\label{prob:inf}
How can we quantify and automatically uncover highly-suspicious entity groups with multiple associated attribute types and values?
\end{problem}
This setting naturally extends to a number of abuse detection scenarios, such as discovering online e-commerce scammers given profile creation, posting and e-mail address attributes, or pinpointing fraudulent advertisers given ad URLs, targeting criteria and key phrases. Our work leverages the insight that groups of entities who share too many, and too unlikely, attribute values are unusual and worth investigation.  We build on this intuition by (a) casting the problem of mining suspicious entity groups over multiple attributes as mining cohesive subgraphs across multiple views, (b) proposing a novel metric to quantify group suspiciousness in multi-view settings, (c) developing a scalable algorithm to mine such groups quickly, and (d) demonstrating effectiveness.  Specifically, our contributions are as follows.

\textbf{Formulation.} \emph{We propose modeling multi-attribute entity data as a multi-view graph, which captures notions of similarity via shared attribute values between entities}. In our model, each node represents an entity (e.g. account, organization), each view represents an attribute (e.g. files uploaded, IP addresses used, etc.) and each non-zero edge indicates some attribute value overlap (see Figure \ref{fig:crown_a}) 

\textbf{Suspiciousness Metric.} \emph{We design a novel suspiciousness metric, which quantifies subgraph (entity group) suspiciousness in multi-view graphs (multi-attribute settings)}. We identify desiderata that suspiciousness metrics in this setting should obey, and prove that our metric adheres to these properties while previously proposed options do not.  

\textbf{Algorithm.} \emph{We propose \method, an algorithm which scalably extracts highly suspicious subgraphs in multi-view graphs}.  Our algorithm uses a greedy, locally optimal search strategy to expand seeds of similar nodes into larger subgraphs with more cohesion.  We discuss design decisions which improve performance including careful seeding, context-aware similarity weighting and performance optimizations.  Our practical implementation leverages all these, enabling efficient entity group mining. 

\textbf{Practicality.} \emph{We evaluate \method both on real-world data and simulated settings, demonstrating high detection performance and efficiency.} We show effectiveness in detecting suspicious behaviors on the  Snapchat advertiser platform, with $12$ distinct attributes and over 230K entities (advertiser organizations).  Figure \ref{fig:crown_b} shows a large network of e-commerce fraudsters uncovered by our approach (nodes are advertiser organizations, and edge colors indicate shared attributes, ranked in suspiciousness).  We also conduct synthetic experiments to evaluate empirical performance against baselines, demonstrating significant improvements in  multi-attribute suspicious entity group detection.  Overall, our methods and results offer significant promise in bettering web platform integrity.
\section{Related Work}
\label{sec:relwork}
We discuss prior work in two related contexts below. 

\textbf{Mining entity groups.}
Prior works have shown that suspicious behaviors often manifest in synchronous group-level behaviors \cite{shah2017many, kumar2018false}.  
Several works assume inputs in the form of a single graph snapshot.  \cite{prakash2010eigenspokes, jiang2016inferring} mine communities using eigenplots from singular value decomposition (SVD) over adjacency matrices. \cite{charikar2000greedy,hooi2016fraudar,blondel2008fast} propose greedy pruning/expansion algorithms for identifying dense subgraphs.  \cite{chakrabarti2004fully, dhillon2003information} co-cluster nodes based on information theoretic measures relating to minimum-description length.  Some prior works \cite{atzmueller2016description,atzmueller2011efficient} tackle subgroup mining from a single graph which also has node attributes, via efficient subgroup enumeration using tree-based branch-and-bound approaches which rely on specialized community goodness scoring functions.  \emph{Unlike these works, our work entails mining suspicious groups based on overlap across multiple attributes and graph views, such that attribute importance is respected and an appropriate suspiciousness measure is used.}

Several works consider group detection in a multi-view (also known as multi-layer and sometimes multi-plex) formulation.  \cite{kim2015community} gives a detailed overview of community mining algorithms in this setting.  \emph{Our work is unlike these in that it is the only one which operates in settings with $>2$ views, considers different view importances, and allows for flexible view selection in subgroup discovery}.  Moreover, our work focuses on detection of \emph{suspicious} groups in a multi-view setting, rather than communities.  \cite{mao2014malspot} uses PARAFAC tensor decomposition to find network attacks. \cite{jiang2016spotting, beutel2013copycatch, shin2016mzoom} use greedy expansion methods for finding entity groups. \cite{shah2015timecrunch, metwally2015scalable} use single graph clustering algorithms, followed by stitching afterwards.  \emph{Our work differs in that we (a) cast multi-attribute group detection into a multi-view graph formulation, (b) simultaneously mine a flexible set of nodes and views to maximize a novel suspiciousness metric, and (c) use a compressed data representation, unlike other methods which incur massive space complexity from dense matrix/tensor representation.}

\textbf{Quantifying suspiciousness.} Most approaches quantify behavioral suspiciousness as likelihood under a given model.  Given labeled data, supervised learners have shown use in suspiciousness ranking tasks for video views \cite{chen2015analysis}, account names \cite{freeman2013using}, registrations \cite{xiao2015detecting} and URL spamicity \cite{ma2009beyond}.  However, given considerations of label sparsity, many works posit unsupervised and semi-supervised graph-based techniques.  \cite{shah2016edgecentric, hooi2016birdnest} propose information theoretic and Bayesian scoring functions for node suspiciousness in edge-attributed networks.  \cite{shah2014spotting, akoglu2010oddball} use reconstruction error from low-rank models to quantify suspicious connectivity.  \cite{lamba2017zoo} ranks node suspiciousness based on participation in highly-dense communities, 
\cite{gyongyi2004combating, guacho2018semi} exploit propagation-based based methods with few known labels to measure account sybil likelihood, review authenticity and article misinformation propensity.  \emph{Our work (a) focuses on groups rather than entities, and (b) needs no labels.} 

Several methods exist for group subgraph/tensor scoring; \cite{lee2010survey} overviews.  The most common are density \cite{lee2010survey}, average degree \cite{charikar2000greedy, hooi2016fraudar}, subgraph weight, singular value \cite{prakash2010eigenspokes}.  \cite{jiang2016spotting} defines suspiciousness using log-likelihood of subtensor mass assuming a Poisson model.  \emph{Our work differs in that we (a) quantify suspiciousness in multi-view graphs, and (b) show that alternative metrics are unsuitable in this setting.}

\section{Problem Formulation}
\label{sec:probform}

\begin{table}[t!]
\scriptsize
\centering
\begin{tabular}{ll}
\toprule
{\bf Symbol} & {\bf Definition} \\ \midrule
$K$ & Number of total attributes (graph views) \\
$ \set{A}_i$ & Set of possible values for attribute $i$ \\
$ \fun{A}_i(\cdot)$ & Maps nodes to attr. $i$ valueset: $\fun{A}_i: \set{G} \rightarrow 2^{\set{A}_i}$ \\
\midrule
$ \set{G}$ & Multi-view graph over all views \\ 
$ \vec{K}$ & Indicator for multi-view graph views, $\vec{K} \in \{0,1\}^{K}$ \\
$ \set{G}_i$ & Single ($i^{th}$) view of multi-view graph $\set{G}$ \\ 
$ \set{G}_{\vec{K}} $ & $\set{G}$ over chosen $\sum\vec{K}$ views, $\set{G}_{\vec{K}} = \{ \set{G}_i | \mathbbm{1}(K_i) \}$ \\
$ N $ & Number of nodes in $\set{G}$ \\
$ V $ & Volume of graph $\set{G}_i$: $N(N-1)/2$ \\
$ C_i $ & Mass of graph $\set{G}_i: \sum_{(a,b) \in \set{G}^2} \, w_i^{(a,b)}$ \\
$ \Rho_i $ & Density of graph $\set{G}_i$: ${C_i}/{V}$ \\
$ w_i^{(a,b)}$ & Edge weight between nodes $a, b$ in $\set{G}_i$, s.t. $w_i^{(a,b)} \in \mathbb{R}^+$ \\
\midrule

$ \set{X}$ & Multi-view subgraph over all views \\
$ \vec{k}$ & Indicator for multi-view subgraph views, $\vec{k} \in \{0,1\}^{K}$ \\
$ \set{X}_i$ & $i^{th}$ view of  multi-view subgraph $\set{X}$ \\
$ \set{X}_{\vec{k}} $ & $\set{X}$ over chosen $\sum\vec{k}$ views, $\set{X}_{\vec{k}} = \{ \set{X}_i | \mathbbm{1}(k_i) \}$ \\
$ n $ & Number of nodes in $\set{X}$ \\
$ v $ & Volume of subgraph of $\set{X}_i$: $n(n-1)/2$ \\
$ c_i $ & Mass of subgraph $\set{X}_i: \sum_{(a,b) \in \set{X}^2} \, w_i^{(a,b)}$ \\
$ \rho_i $ & Density of subgraph $\set{X}_i$: ${c_i}/{v}$ \\
$z$ & Constraint on chosen subgraph views, $|\vec{k}| = z$ \\
$ f(\cdot) $ & Mass-parameterized suspiciousness metric \\
$ \hat{f}(\cdot) $ & Density-parameterized suspiciousness metric \\
\bottomrule
\end{tabular}
\caption{\label{tbl:symb} Frequently used symbols and definitions.}
\end{table}
\setlength{\textfloatsep}{4pt}

The problem setting we consider (Problem \ref{prob:inf}) is commonly encountered in many real anti-abuse scenarios: a practitioner is given data spanning a large number of entities (i.e. users, organizations, objects) with a number of associated attribute types such as webpage URL, observed IP address, creation date, and associated (possibly multiple) values such as \emph{xxx.com}, \emph{5.10.15.20, 5.10.15.25}, \emph{01/01/19} and is tasked with finding suspicious behaviors such as fake sybil accounts or engagement boosters.  

In tackling this problem, one must make several considerations: \emph{What qualifies as suspicious? How can we quantify suspiciousness? How can we discover such behavior automatically?}  We build from the intuition that suspicious behaviors often occur in lockstep across \emph{multiple} entities (see Section \ref{sec:relwork}), and are most discernible when considering relationships \emph{between} entities.  For example, it may be challenging without context to determine suspiciousness of an advertiser linking to URL \emph{xxx.com}, logging in from IPs \emph{5.10.15.20} and with creation date \emph{01/01/19}; however, knowing that 99 other advertisers share these exact properties, our perception of suspiciousness increases drastically.  Thus, we focus on mining suspicious entity groups, where suspiciousness is governed by the degree of synchronicity between entities, and across various attributes.  

We draw inspiration from graph mining literature; graphs offers significant promise in characterizing and analyzing between-entity similarities, and are natural data structures for the same.  Graphs model individual entities as \emph{nodes} and relationships between them as \emph{edges}; for example, a graph could be used to describe \emph{who-purchased-what} relationships between users and products on Amazon.  
Below, we discuss our approach for tackling Problem \ref{prob:inf}, by leveraging the concept of \emph{multi-view} graphs.  We motivate and discuss three associated building blocks: (a) using multi-view graphs to model our multi-attributed entity setting, (b) quantifying group suspiciousness, and (c) mining highly suspicious groups in such multi-view graphs.  Throughout this paper, we utilize formal notation for reader clarity and convenience wherever possible; Table \ref{tbl:symb} summarizes the frequently used symbols and definitions, partitioned into attribute-related, graph-related and subgraph-related notation. 

\subsection{Representing Multi-attribute Data}
In this work, we represent multi-attribute entity data as a \emph{multi-view graph} (MVG).  An MVG is a type of graph in which we have multiple \emph{views} of interactions, usually in the form of distinct edge types; to extend our previous example, we could consider who-purchased-what, who-rates-what and who-viewed-what relationships as different views between users and products.  Each view of an MVG can individually be considered as a single facet or mode of behavior, and spans over the same, fixed set of nodes.  In our particular setting, we consider a representation of multi-attribute data in which we have a fixed set of nodes (entities), and their relationships across multiple views (attributes).  Thus, edges in each graph view represent relationships between entities, and are weighted based on their attribute similarities in that attribute space.

Formally, we have a set of $N$ entities with $K$ associated attribute types over the attribute value spaces $\set{A}_1 \ldots \set{A}_K$.  For notational convenience, we introduce attribute-value accessor mapping functions $\fun{A}_1 \ldots \fun{A}_K$ for the $K$ attribute spaces respectively, such that $\fun{A}_i: \set{G} \rightarrow 2^{\set{A}_i}$.  Effectively, $\fun{A}_i(a)$ denotes the subset of attribute values from $\set{A}$ associated with node $a \in \set{G}$.  We can construe this as an MVG $\set{G}$ on $N$ nodes (entities) and $K$ views (attributes), such that $\set{G}$ is a set of individual graph views $\{\set{G}_1 \ldots \set{G}_K\}$.  For convenience, we introduce notations $\set{G}_i$ and $\set{G}_{\vec{K}}$, to refer to a specific graph view, and a specific subset of graph views ($\vec{K}$ is a $K$-length vector indicator) respectively. We consider an edge $a \leftrightarrow b$ with $w_i^{(a,b)} > 0$ in view $\set{G}_i$ to exist between nodes $a,b$ if $\fun{A}_i(a) \cap \fun{A}_i(b) \neq \varnothing$, or informally $a$ and $b$ share at least one common feature value on the $i^{th}$ attribute.  If $\fun{A}_i(a) \cap \fun{A}_i(b) = \varnothing$ (no overlap between feature values), we consider that no edge between $a,b$ exists in $\set{G}_i$, or equivalently that $w_i^{(a,b)} = 0$.  In general, we consider non-negative weights, so that $w_i^{(a,b)} \in \mathbb{R}^+$.  We can consider many weighting strategies, but posit the notion that large weight between $\fun{A}_i(a)$ and $\fun{A}_i(b)$ indicates intuitively higher, or more rare similarities.

\subsection{Quantifying Group Suspiciousness}
\label{sec:proposed_axioms}

\begin{figure}
    \centering
    \includegraphics[width=\linewidth]{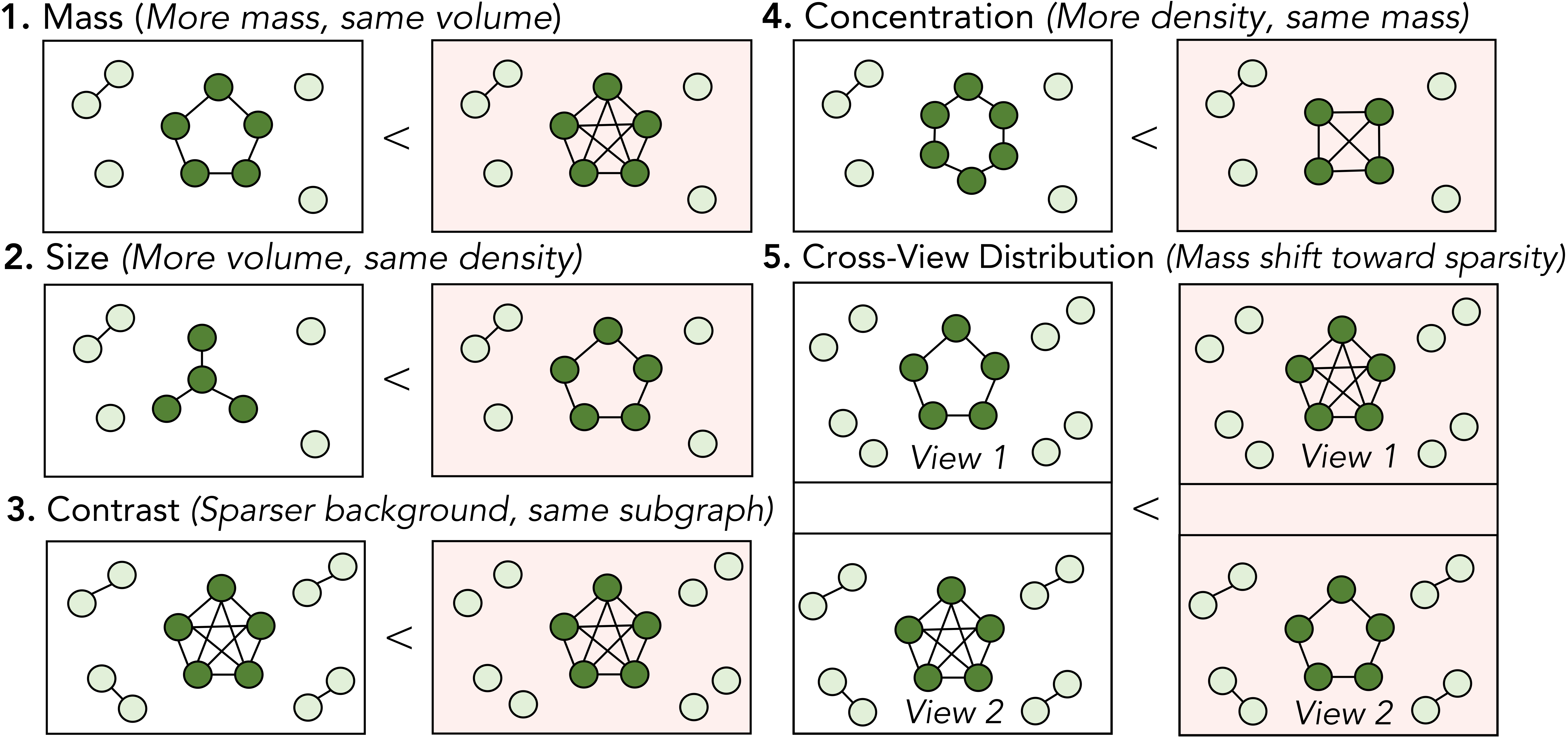}
    \caption{Illustration of Axioms \ref{axm:mass}-\ref{axm:crossview} through toy examples.}
    \label{fig:axioms}
\end{figure}

Given the above representation, our next aim is to define a means of quantifying the suspiciousness of an arbitrary multi-view subgraph (MVSG). This problem is important in practice, given its implications on manual review prioritization, and practitioner decision-making on enforcement and actioning against various discovered behaviors.  Our basic intuition is that a subset of nodes (groups of entities) which are highly similar to each other (have considerable edge weights between them in multiple views) are suspicious.  \emph{But how can we compare across different MVSGs with varying edge weights, sizes and views in a principled way?  For example, which is more suspicious: 5 organizations with the same IP address and URL, or 10 organizations with the same postal code and creation date?}  The answer is not obvious; this motivates us to formally define criteria that any MVSG scoring metric \emph{should} obey, and define principled ways of quantifying how suspiciousness increases or decreases in these contexts. 

To do so, we first propose several preliminaries which aid us in formalizing these criteria. Informally, we consider an MVSG $\set{X}$ of $n \leq N$ nodes and $k \leq K$ views as a subset of nodes, views or both from $\set{G}$; we denote this relationship compactly as $\set{X} \subseteq \set{G}$, and sometimes abuse notation (when clear) to refer to the associated node set as $\set{X}$.  We introduce similar indexing notation as in the MVG case, such that $\set{X}_i$ and $\set{X}_{\vec{k}}$ refer to a specific subgraph view, and a subset of subgraph views ($\vec{k}$ is a $K$-length vector indicator) respectively.  We define the \emph{mass} of $\set{X}_i$ as $c_i = \sum_{(a,b) \in \set{X}^2} \, w_i^{(a,b)}$, which represents the total sum of edge weights for all edges between nodes in $\set{X}$.  We define the \emph{volume} of $\set{X}_i$ as $v =$ \emph{n choose 2} $=n(n-1)/2$, which denotes the possible number of edges between $n$ nodes. Note that the volume of $\set{X}_i$ is invariant to the view chosen and is only dependent on $n$ (thus, we drop the subscript).  Finally, we define the \emph{density} of $\set{X}_i$ as the ratio between its mass and volume, or $\rho_i = {c_i}/{v_i}$.  We define analogs for mass, volume and density of the associated MVG $\set{G}_i$ with $C_i$, $V$ and $\Rho_i$ respectively.  In general, upper-case variables denote properties of $\set{G}$, while lower-case letters denote properties of $\set{X}$. Given these terms, which are summarized in Table 1, we propose the following axioms below which should be satisfied by an MVSG scoring metric. Note that $ f(\cdot) $ and $ \hat{f}(\cdot) $ represent the desired MVSG scoring metric paramerized by subgraph mass and density, respectively.

\begin{axiom}[Mass] \label{axm:mass}
Given two subgraphs $\set{X}, \set{X}' \subseteq \set{G}$ with the same volume, and same mass in all except one view s.t. $c_i > c_i'$,  $\set{X}$ is more suspicious. Formally,
\begin{displaymath}
c_i > c_i' \Rightarrow f(n,\vec{c},N,\vec{C}) > f(n,\vec{c}',N,\vec{C})
\end{displaymath} 
\end{axiom}
Intuitively, more mass in a view indicates increased attribute similarities between entities, which is more suspicious. For example, it is more suspicious for a group of users to all share the same profile picture, than to all have different profile pictures. 

\begin{axiom}[Size] \label{axm:size}
Given two subgraphs $\set{X}, \set{X}' \subseteq \set{G}$ with same densities $\vec{\rho}$, but different volume s.t. $v > v'$,  $\set{X}$ is more suspicious. Formally,
\begin{displaymath}
v > v' \Rightarrow \hat{f}(n,\vec{\rho},N,\vec{P}) > \hat{f}(n',\vec{\rho},N,\vec{P})
\end{displaymath} 
\end{axiom}
Intuitively, larger groups which share attributes are more suspicious than smaller ones, controlling for density of mass. For example, 100 users sharing the same IP address is more suspicious than 10 users doing the same.

\begin{axiom}[Contrast] \label{axm:contrast}
Given two subgraphs $\set{X} \subseteq \set{G}$, $\set{X}' \subseteq \set{G}'$ with same masses $\vec{c}$ and size $v$, s.t. $\set{G}$ and $\set{G}'$ have the same density in all except one view s.t. $\Rho_i < \Rho_i'$,  $\set{X}$ is more suspicious.  Formally,
\begin{displaymath}
P_i < P_i' \Rightarrow \hat{f}(n,\vec{\rho},N,\vec{P}) > \hat{f}(n,\vec{\rho},N,\vec{P}')
\end{displaymath} 
\end{axiom}
Intuitively, a group with fixed attribute synchrony is more suspicious when background similarities between attributes are rare. For example, 100 users using the same IP address is generally more rare (lower $\Rho_i$) than 100 users all from the same country (higher $\Rho_i'$).

\begin{axiom}[Concentration] \label{axm:concentration}
Given two subgraphs $\set{X}, \set{X}' \subseteq \set{G}$ with same masses $\vec{c}$ but different volume s.t. $v < v'$, $\set{X}$ is more suspicious.  Formally,
\begin{displaymath}
v < v' \Rightarrow f(n,\vec{c},N,\vec{C}) > f(n',\vec{c},N,\vec{C})
\end{displaymath} 
\end{axiom}
Intuitively, a smaller group of entities sharing the same number of similarities is more suspicious than a larger group doing the same. For example, finding 10 instances (edges) of IP sharing between a group of 10 users is more suspicious than finding the same in a group of 100 users.

\begin{axiom}[Cross-view Distribution] \label{axm:crossview}
Given two subgraphs $\set{X}, \set{X}' \subseteq \set{G}$ with same volume $v$ and same mass in all except two views $i, j$ with densities $\Rho_i < \Rho_j$ s.t. $\set{X}$ has $c_i = M, c_j = m$ and $\set{X}'$ has $c_i = m, c_j=M$ and $M>m$, $\set{X}$ is more suspicious. Formally,
\begin{gather*}
\Rho_i < \Rho_j \; \wedge \; c_i > c_i' \; \wedge \; c_j < c_j' \; \wedge \; c_i+c_j=c_i'+c_j' \Rightarrow \\
f(n,\vec{c},N,\vec{C}) > f(n,\vec{c}',N,\vec{C}) 
\end{gather*}
\end{axiom}

Intuitively, a fixed mass is more suspicious when distributed towards a view with higher edge rarity. For example, given 100 users, it is more suspicious for 100 pairs to share IP addresses (low $\Rho_i$) and 10 pairs to share the same country (high $\Rho_j$), than vice versa.  This axiom builds from Axiom \ref{axm:contrast}.


Figure \ref{fig:axioms} illustrates these axioms via toy examples. Informally, these axioms assert that when other subgraph attributes are held constant, suspiciousness constitutes: higher mass (Axiom \ref{axm:mass}), larger volume with fixed density (Axiom \ref{axm:size}), higher sparsity in overall graph (Axiom \ref{axm:contrast}), higher density (Axiom \ref{axm:concentration}), and more mass distributed in sparser views (Axiom \ref{axm:crossview}). These axioms serve to formalize desiderata, drawing from intuitions stemming from prior works; however, prior works \cite{jiang2016inferring, hooi2016fraudar} do not consider multi-view cases, and as we show later are unable to satisfy these axioms.  Notably, such metrics produce unexpected results when scoring MVSGs, and thus lead to  misaligned expectations in resulting rankings. We propose the following problem:


\begin{problem}[MVSG Suspiciousness Scoring]
\label{prob:scoring}
\textbf{Given} an MVG $\set{G}$, \textbf{define} an MVSG scoring metric $f: \set{S} \rightarrow \mathbb{R}$ over the set $\set{S}$ of candidate MVSGs which satisfies Axioms \ref{axm:mass}-\ref{axm:crossview}.
\end{problem}
We discuss details of our solution in Section  \ref{sec:metric}.

\subsection{Mining Suspicious Groups}
\label{sec:probform-msg}

Given an MVSG scoring metric $f$, our next goal is to automatically extract MVSGs which score highly with respect to $f$.  This is a challenging problem, as  computing $f$ for each possible MVSG in $\set{G}$ is intractable; there are $2^N - 1$ non-empty candidate node subsets, and $2^K - 1$ non-empty view subsets over which to consider them.  Clearly, we must resort to intelligent heuristics to mine highly suspicious MVSGs while avoiding an enumeration strategy.  We make several considerations in approaching this task.  Firstly, since suspiciousness is clearly related to shared attribute behaviors, we propose exploiting our data representation to identify candidate ``seeds'' of nodes/entities which are promising to evaluate in terms of $f$.  

Moreover, we focus on mining MVSGs $\set{X}$ (WLOG) given a constraint on the numbers of views such that $|\vec{k}|_1 = z$.  This is for a few reasons: Firstly, it is not straightforward to compare the suspiciousness of two MVSGs with varying numbers of views, as these are defined on probability spaces with different numbers of variables.  Secondly, we consider that (a) entities may not exhibit suspicious behaviors in all $K$ views/attributes simultaneously, but rather only a subset, and (b) in evaluation, practitioners can only interpretably parse a small number of relationship types between a group at once; thus, in practice we choose a constraint $z \leq K$ is generally small and can be suitably chosen and adapted according to empirical interpretability.  In effect, this simplifies our problem as we only consider evaluating and mining MVSGs defined over $z$ views, so that we consider \emph{(K choose z)} total view subsets, rather than $2^K - 1$.  We propose the following problem:

\begin{problem}[Mining Suspicious MVSGs]
\label{prob:mining}
\textbf{Given} an MVG $\set{G}$ on $K$ views, a suspiciousness metric $f: \set{S} \rightarrow \mathbb{R}$ over the set $\set{S}$ of candidate MVSGs, and an interpretability constraint $z \leq K$, \textbf{find} the highest scoring MVSGs $\set{X} \subseteq \set{G}$ (WLOG) for which $|\vec{k}|_1 = z$.
\end{problem}

We detail our approach and implementation strategy for solving this problem in Section \ref{sec:method}.
\section{Proposed Suspiciousness Metric}
\label{sec:metric}

We propose an MVSG scoring metric based on an underlying data model for $\set{G}$ in which undirected edges between the $N$ nodes are distributed i.i.d within each of the $K$ views.  For single graph views, this model is widely known as the Erd\"{o}s-R\'{e}nyi (ER) model \cite{newman2002random}.  The ER model is a standard ``null'' model in graph theory, as it provides a framework to reason about a graph with pure at-chance node interactions, and is a reasonable assumption given no prior knowledge about how a govem graph is structured.  In our scenario, we consider two extensions to such a model unexplored by prior work: (a) we consider multiple graph views, and (b) we consider weighted cell values instead of binary ones.  The first facet is necessary to support multi-attribute or multi-view settings, in which behaviors in one view may be very different than another (i.e. shared IP addresses may be much rarer than shared postal codes).  The second facet is necessary to support continuous edge weights $w_i^{(a,b)} \in \mathbb{R}^+$ capable of richly describing arbitrarily complex notions of similarity between multiple entities (i.e. incorporating both number of shared properties, as well as their rarities).  To handle these extensions, we propose the Multi-View Erd\"{o}s-R\'{e}nyi-Exponential model (\mvere):

\begin{defn}[\mvere Model] A multi-view graph $\set{G}$ generated by the \mvere model is defined such that $w_i^{(a,b)} \thicksim Exp(\lambda_i)$ for all edges $a \leftrightarrow b$ in $\set{G}_i$.
\end{defn}

The \mvere model is a natural fit in our setting for several reasons.  Firstly, the Exponential distribution is continuous and defined on support $\mathbb{R}^+$, which is intuitive as similarity is generally non-negative.  Secondly, it has mode 0, which is intuitive given that sharing behaviors are sparse (most entities should not share properties), and the likelihood of observing high-similarity drops rapidly. 

Given that there are  $V = {N(N-1)}/{2}$ edges (including 0-weight edges) in each view, we can derive the closed-form  MLE simply as $\lambda_i={N(N-1)}/{(2C_i)} = {V}/{C_i} = \Rho_i^{-1}$.  From this, we can write the distribution of single-view MVSG mass as follows:

\begin{lemma}[MVSG subgraph mass]
The mass $M_i$ of a \mvere-distributed subgraph of $Exp(\lambda_i)$ follows $M_i \thicksim Gamma(v, \Rho_i^{-1})$
\end{lemma}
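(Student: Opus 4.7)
The plan is to exhibit $M_i$ as a sum of independent exponentials and then invoke the standard fact that such sums are Gamma-distributed. Under the \mvere model, the subgraph view $\set{X}_i$ consists of exactly $v = n(n-1)/2$ potential undirected edges, and by definition each of the corresponding edge weights $w_i^{(a,b)}$ is drawn i.i.d.\ from $Exp(\lambda_i)$. Since $M_i = \sum_{(a,b) \in \set{X}^2} w_i^{(a,b)}$ is a finite sum of independent identically distributed exponentials, we can apply the classical convolution identity.

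Concretely, I would proceed in three short steps. First, recall from the preceding paragraph that the MLE of the rate parameter from the observed MVG $\set{G}_i$ is $\lambda_i = V/C_i = \Rho_i^{-1}$, so that each $w_i^{(a,b)} \sim Exp(\Rho_i^{-1})$. Second, note that the moment-generating function of an $Exp(\Rho_i^{-1})$ variable is $\varphi(t) = (1 - \Rho_i t)^{-1}$ for $t < \Rho_i^{-1}$, so by independence the MGF of $M_i$ is
\begin{equation*}
\E[e^{tM_i}] = \prod_{(a,b)} \varphi(t) = (1 - \Rho_i t)^{-v}.
\end{equation*}
Third, this is precisely the MGF of a $Gamma(v, \Rho_i^{-1})$ distribution under the shape--rate parameterization, so by uniqueness of MGFs we conclude $M_i \sim Gamma(v, \Rho_i^{-1})$.

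There is no real obstacle here beyond a bookkeeping subtlety: being careful that the second argument of $Gamma$ is the rate $\Rho_i^{-1}$ (matching the rate of each exponential summand) rather than the scale $\Rho_i$, so that the stated lemma is consistent with the convention $w_i^{(a,b)} \sim Exp(\lambda_i)$ in the \mvere definition. An equivalent convolution-based proof (integrating the density of a sum of exponentials inductively on $v$) would yield the same conclusion without invoking MGFs, but the MGF route is the cleanest since it reduces the argument to a one-line product.
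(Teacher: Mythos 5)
Your proof is correct and follows essentially the same route as the paper's: both decompose $M_i$ into a sum of $v$ i.i.d.\ $Exp(\Rho_i^{-1})$ edge weights and invoke the standard fact that such a sum is $Gamma(v, \Rho_i^{-1})$, the paper citing the convolution identity directly while you verify it via moment-generating functions. The extra care you take with the shape--rate parameterization and the MLE $\lambda_i = V/C_i = \Rho_i^{-1}$ matches the paper's setup exactly.
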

\begin{proof}
This follows from convolution of the Exponential distribution; given $M_i \sim Exp(\lambda)$, $\sum_{i=1}^{g} M_i \thicksim Gamma(g, \lambda)$.
\end{proof}


This enables us to define the \emph{suspiciousness} of a given MVSG $\set{X}$ across multiple views in terms of the likelihood of observing some quantity of mass in those views.  Our metric is defined as

\begin{defn}[MVSG Scoring Metric $f$]
The suspiciousness, $f$, of an MVSG $\set{X}$ with $M_i \thicksim Gamma(v, \Rho_i^{-1})$ and volume $v$ is the negative log-likelihood of its mass $\vec{c}$ under the \mvere model:
\begin{displaymath} 
f\left( n,\vec{c},N,\vec{C} \right) = -\log \left( \prod_{i = 1}^{K} Pr \left( M_i = c_i \right) \right)
\end{displaymath}
\end{defn}

We can write this metric in longer form as follows:
\begin{align*} 
& f\left( n,\vec{c},N,\vec{C} \right) = -\log \left( \prod_{i = 1}^{K} Pr \left( M_i = c_i \right) \right) \\
&= \sum_{i=1}^{K}   -v\log\left(\frac{V}{C_i}\right) + \log\Gamma(v) - (v-1)\log c_i - \frac{V c_i}{C_i} \\
&= \sum_{i=1}^{K}   v \log \frac{C_i}{V} + v \log v - v - \log v  - v \log c_i + \log c_i + \frac{V c_i}{C_i}
\end{align*} 

The last line is due to $\log\Gamma(v) = \log v! - \log v$, after which we use Stirling's approximation to simplify $\log v! \approx v \log v - v$. It is sometimes convenient to write suspiciousness in terms of densities $\vec{\rho}, \vec{\Rho}$; thus, we also introduce a so-parameterized variant $\hat{f}$ where we use $\rho_i = c_i / v$ and $\Rho_i = C_i / v$ and simplify as
\begin{displaymath}
\hat{f}\left( n,\vec{\rho},N,\vec{\Rho} \right) =
\sum_{i=1}^{K} v \log \left( \Rho_i \right) - v \log \rho_i - v + \log \rho_i + v\frac{\rho_i}{\Rho_i}
\end{displaymath}

The intuition for this metric is that high suspiciousness is indicated by low probability of observing certain mass. Since we are interested MVSGs with unlikely \emph{high} density (indicating synchrony between entities), we consider only cases where $\rho_i > \Rho_i$ for all views, to avoid focusing on excessively sparse MVSGs.

\begin{lemma}[Adherence to Axioms]
\label{lem:adherence}
Our proposed suspiciousness metric $f$ (and $\hat{f}$) satisfies each of the MVSG scoring Axioms \ref{axm:mass}-\ref{axm:crossview}.
\end{lemma}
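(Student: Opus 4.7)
The plan is to verify each of the five axioms independently by taking the appropriate partial derivative of $f$ (or $\hat{f}$) with the stipulated variables held fixed, and reading off its sign. Throughout I would rely on the standing regime assumption, noted in the paragraph preceding the lemma, that $\rho_i > P_i$ holds in every view (equivalently $V c_i > v C_i$). This assumption is what makes the signs of the key terms unambiguous and must be threaded carefully through each case.

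For Axioms \ref{axm:mass}, \ref{axm:contrast}, and \ref{axm:concentration}, the verifications reduce to one-variable sign checks on the closed-form expressions. For Axiom \ref{axm:mass}, I would compute $\partial f / \partial c_i = (1-v)/c_i + V/C_i$ and note that this is strictly positive because $V c_i > v C_i > (v-1) C_i$. For Axiom \ref{axm:contrast}, differentiating $\hat{f}$ in $P_i$ at fixed $\rho_i$ yields $(v/P_i^{2})(P_i - \rho_i)$, which is strictly negative under the regime assumption, so $\hat{f}$ decreases as $P_i$ grows. For Axiom \ref{axm:concentration}, differentiating $f$ in $v$ at fixed $\vec{c}$ and $\vec{C}$ collapses to $\sum_i \log(P_i/\rho_i) - K/v$, and every summand is negative, giving the required direction.

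Axiom \ref{axm:size} is the one case where the derivative has no sign obvious from a single inequality. Holding $\vec{\rho}$ and $\vec{P}$ fixed while letting $v$ grow in $\hat{f}$ produces $\partial \hat{f}/\partial v = \sum_i (t_i - 1 - \log t_i)$ with $t_i = \rho_i/P_i > 1$. Here I would invoke the elementary fact that $g(t) = t - 1 - \log t$ is strictly positive for $t \neq 1$, via $g(1)=0$ and $g'(t) = 1 - 1/t > 0$ on $t > 1$.

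Axiom \ref{axm:crossview} is the most interesting, and I would prove it by direct comparison rather than differentiation. Writing out $f$ on the two configurations $(c_i, c_j) = (M, m)$ and $(m, M)$, the logarithmic terms $-(v-1)(\log c_i + \log c_j) + (\log c_i + \log c_j)$ are symmetric under the swap and cancel exactly; only the linear mass penalty remains, and its difference is $V(M-m)(1/C_i - 1/C_j)$, positive because $M > m$ and $C_i < C_j$ (equivalently $P_i < P_j$ since $V$ is shared). Notably no regime assumption is needed here; the asymmetry is built into the inverse-mass weighting of the exponential model. The main obstacle I anticipate is not difficulty but careful bookkeeping: being explicit about which variables are held fixed when partially differentiating (e.g., fixing $\vec{\rho}$ in Axiom \ref{axm:size} couples $c_i$ to $v$, whereas fixing $\vec{c}$ in Axiom \ref{axm:concentration} does not), and invoking the $\rho_i > P_i$ assumption only where it is genuinely needed.
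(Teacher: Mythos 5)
Your proposal is correct, and for Axioms \ref{axm:mass}--\ref{axm:concentration} it is essentially identical to the paper's proof: the same partial derivatives, the same sign checks, and the same use of the regime assumption $\rho_i > \Rho_i$. Your Contrast derivative $(v/\Rho_i^{2})(\Rho_i - \rho_i)$ is an algebraic rewriting of the paper's $(v/\Rho_i)(1 - \rho_i/\Rho_i)$, and your inequality $t - 1 - \log t > 0$ for $t>1$ is exactly the paper's argument for Size. The one genuine divergence is Axiom \ref{axm:crossview}: the paper compares marginal derivatives with respect to mass, $\partial f_i/\partial c_i - \partial f_j/\partial c_j = \Rho_i^{-1} - \Rho_j^{-1} > 0$, evaluated at equal masses $c_i = c_j$, arguing that mass added to the sparser view raises $f$ faster; you instead compute the finite difference of $f$ between the two configurations $(M,m)$ and $(m,M)$ directly, note that the logarithmic terms are symmetric under the swap and cancel, and reduce the comparison to $V(M-m)\left(C_i^{-1} - C_j^{-1}\right) > 0$. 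Your version is arguably the more faithful one, since the axiom as stated compares two discrete configurations rather than derivatives on the diagonal, and your observation that no density-regime assumption is needed for this axiom is correct. A small additional point in your favor: for Concentration the paper asserts $-\log(\rho_i/\Rho_i) \leq -1$, which is an overstatement (only $<0$ holds, and only $<0$ is needed); your phrasing that each summand $\log(\Rho_i/\rho_i) - 1/v$ is negative avoids that slip.
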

\begin{proof}
We give the full proofs in Section \ref{sec:repro}.
\end{proof}

We note that while the model and metric above considers independent views, we could easily consider adapting the assumed model to account for arbitrary factorizations of the joint distribution, given this knowledge.  However, choosing a more complex  factorization is a problem-specific inference task, and also raises issues with the curse of dimensionality which is broader than the scope of our work. We employ the independence assumption due to its limited estimation challenges under sparsity, generality, and demonstrated usefulness in a wealth of prior works despite their simplicity. 

\subsection{Issues with Alternative Metrics}

\begin{table}[t!]
\setlength{\tabcolsep}{3pt}
\footnotesize
\caption{\label{tbl:related_comparison} Comparison with alternative metrics.}
\centering
\begin{tabular}{l | lllll || l}
Axiom Adherence  & \rotatebox[origin=c]{70}{\mass \cite{lee2010survey}} & \rotatebox[origin=c]{70}{\avgdeg \cite{charikar2000greedy}} & \rotatebox[origin=c]{70}{\dens \cite{lee2010survey}} & \rotatebox[origin=c]{70}{\singval \cite{prakash2010eigenspokes}} & \rotatebox[origin=c]{70}{\cssusp \cite{jiang2016spotting}} & \rotatebox[origin=c]{70}{\;\method} \\ 
\hline
Mass  & \tick & \tick & \tick & \tick & \textcolor{darky}{\textbf{?}} \tablefootnote{\cssusp is limited to discrete edge counts and cannot handle continuous mass settings.}& \tick \\
Size   & \tick & \tick& \cross& \tick & \textcolor{darky}{\textbf{?}} $^1$& \tick\\
Contrast &\cross & \cross&\cross & \cross&\textcolor{darky}{\textbf{?}} $^1$ & \tick\\
Concentration  & \cross& \tick&\tick & \tick & \textcolor{darky}{\textbf{?}} $^1$ & \tick\\
Cross-View Distr.  & \cross&\cross &\cross &\cross &\cross & \tick\\
\hline
\end{tabular}
\end{table}

One may ask, \emph{why not use previously established metrics of suspiciousness?} We next show that these metrics produce results which violate one or more proposed Axioms, and are thus unsuitable for our problem. We compare their performance on the toy settings from Figure \ref{fig:axioms}.  Each subgraph pair illustrates one of Axioms \ref{axm:mass}-\ref{axm:crossview}, and the shaded figures indicates higher intuitive suspiciousness. We consider 5 alternative metrics: mass (\mass) and density (\dens) \cite{lee2010survey}, average degree (\avgdeg) \cite{charikar2000greedy}, singular value (\singval) \cite{prakash2010eigenspokes} and \cssusp (metric from \cite{jiang2016spotting}). 

\textbf{Overview of Alternative Metrics.} \label{sec:overview-alternative} Prior work has suggested \mass, \avgdeg and \dens as suspiciousness metrics for single graph views \cite{shin2016mzoom, beutel2013copycatch,hooi2016fraudar}. We extend these to multi-view cases by construing an \emph{aggregated} view with edge weights summed across the $K$ views.
\cite{jiang2016spotting} proposes \cssusp for suspiciousness in discrete, multi-modal tensor data; we can apply this by construing an MVSG $\set{X}$ as a 3-mode tensor of $n \times n \times K$.  In short, other metrics are agnostic to differences across views, hence they all \emph{violate Axiom \ref{axm:crossview}} (Cross-View Distribution).  Below, we discuss the specifics of each alternative metric, and their limitations with respect to Axioms \ref{axm:mass}-\ref{axm:concentration}. 

\mass: Mass is defined as $\sum_{i=1}^{K} c_i$, or the sum over all edge weights and views. Table \ref{tbl:related_comparison} shows that it violates Axioms \ref{axm:size} (Size) and \ref{axm:contrast} (Contrast) by not considering subgraph size $v$ or graph density $\Rho_i$.

\avgdeg: Average degree is defined as $\sum_{i=1}^{K} c_i/{n}$, or average \mass per node. It does not consider graph density $P_i$ and thus violates Axiom \ref{axm:contrast} (Contrast).

\dens: Density is defined as $\sum_{i=1}^{K} c_i / v$, or average \mass per edge, over $v$ edges. It trivially violates Axiom \ref{axm:size} (Size) by not considering the ratio of subgraph density and size. It also violates Axiom \ref{axm:contrast} (Contrast) by not considering graph density $P_i$.

\singval: Singular values are factor ``weights'' associated with the singular value decomposition $\mathbf{A} = \mathbf{U \Sigma V}^T$; here, we consider the leading singular value $\Sigma_{\scriptstyle 0,0}$ over the view-aggregated $\mathbf{A}$.  \cite{shah2014spotting} shows that for i.i.d. cells in $\mathbf{A}$, $\Sigma_{\scriptstyle 0,0} = \sqrt{n\rho_i}$, though this does not hold generally. Under this assumption, the metric violates Axiom \ref{axm:contrast} (Contrast), by not considering graph density $P_i$.

\cssusp: \cite{jiang2016spotting} defines block suspiciousness as $-\log(Pr(M = \sum_{i=1}^{K} c_i))$, where $M$ is subtensor mass under assumption that cells are discrete, Poisson draws.  However, this constrains adherence to Axioms \ref{axm:mass}-\ref{axm:concentration} (Mass, Size, Contrast and Concentration) only for discrete settings, and is unusable for continuous edge weights/cell values.  This limitation is notable, as later shown in Sections \ref{sec:implementation} and \ref{sec:eval}.   

\section{Proposed Algorithm: {\protect \methodT}}
\label{sec:method}

Given the metric defined in the previous section, we next aim to efficiently mine highly suspicious groups, as proposed in  Problem \ref{prob:mining}. At a high-level, our approach is to start with a small MVSG over a few nodes and views, and expand via a greedy, alternating maximization approach (between nodes and views), evaluating $f$ until a local optimum is reached. Our main considerations are twofold: \emph{How can we scalably expand a given seed until convergence?} and \emph{How can we select good seeds in the first place?} We next address these questions.

Our goal is to find MVSGs $\set{X} \subseteq \set{G}$ (WLOG) which score highest on $f$, and also meet the interpretability constraint $|\vec{k}|_1 = z$.
As mentioned in Section \ref{sec:probform-msg}, full enumeration is combinatorial and computationally intractable in large data.  
Thus, we resort to a greedy approach which allows scalable convergence to locally optimal MVSGs.  Our approach, \method, is outlined in Algorithm \ref{alg:main}.


In short, the algorithm begins by seeding an MVSG defined over a few nodes and $z$ views according to some notion of suitability, and then utilizes an alternating maximization approach to improve the seed: the node set is kept fixed while the view set is updated, and subsequently the view set is kept fixed while the node set is updated.  The updation steps only occur when $f$ increases, and since suspiciousness is bounded for any given $\mathcal{X}$ (i.e. there are a finite number of possible choices for nodes and views), we ensure convergence to a local optimum.  Next, we discuss updating and seeding choices, where we cover the referenced updation and seeding methods.


\textbf{Updating choices.} In order to find a highly suspicious group of entities, we aim to optimize the view set and node set selection via the $\updateviews$ and $\updatenodes$ methods. $\updateviews$ can be written concisely as  $\operatorname*{argmax}_{\scriptstyle \vec{k}} f\left( n,\vec{c},N,\vec{C} \right)$, subject to $|\vec{k}| = z$. This is straightforward given our metric; we independently choose the top-$z$ most suspicious views, given the fixed node set from the prior iteration. For $\updatenodes$, we limit our search space to adding or removing a single node in the MVSG, which is dramatically more tractable than the $2^{N} - 1$ possible node set choices over $\set{G}$.  We write $\updatenodes$ concisely as $\operatorname*{argmax}_{\scriptstyle \set{X}'} f\left( n,\vec{c},N,\vec{C} \right)$, subject to $|\mathcal{X}' \setminus \mathcal{X}| + |\mathcal{X} \setminus \mathcal{X}'| \leq 1$, meaning that each update changes the node set by, at most, a single entity (one more or one less).

\textbf{Seeding choices.}  Clearly, update quality relies on reasonable seeding strategy which is able to find candidate suspicious MVSGs and also explore $\set{G}$.  The $\seedviews$ method can be achieved in multiple ways; ultimately, the goal is to chooses $z$ initial views such that the seeds expand to a diverse set suspicious MVSGs. Given the desire for diversity, a reasonable approach is to sample $z$ views uniformly as done in prior work \cite{jiang2014inferring, shin2016mzoom}.  However, a downside with random sampling is that it does not respect our intuition regarding the non-uniform value of entity similarity across views.  For example, consider that a view of \emph{country} similarity has only 195 unique values, whereas a view of \emph{IP Address} similarity has $2^{32}$ unique values; naturally, it is much more common for overlap in the former than the latter, despite the latter having a higher signal-to-noise ratio.  Thus, in practice, we aim to sample views in a weighted fashion, favoring those in which overlap occurs less frequently. We considered using the inverse of view densities $\vec{\rho}$ as weights, but we observe that density is highly sensitive to outliers from skewed value frequencies. We instead use the inverse of the $q^{th}$ frequency percentiles across views as more robust estimates of their overlap propensity ($q \geq 95$ works well in practice). The effect is that lower signal-to-noise ratio views such as \emph{country} are more rarely sampled.   

Defining $\seednodes$ is more challenging.  As Section \ref{sec:metric} mentions, we target overly dense MVSGs $\set{X}$ (WLOG) for which $\rho_i > \Rho_i$ for all views.  The key challenge is identifying seeds which satisfy this constraint, and thus offer promise in expanding to more suspicious MVSGs.  Again, one could consider randomly sampling node sets and discarding unsatisfactory ones, but given that there are $z$ constraints (one per view), the probability of satisfaction rapidly decreases as $z$ increases (Section \ref{sec:eval} elaborates).  To this end, we propose a carefully designed, greedy seeding technique called \seedMethod (see Algorithm \ref{alg:greedyseed}) which enables us to quickly discover good candidates.  Our approach exploits that that satisfactory seeds occur when entities share more similarities, and strategically constructs a node set across views which share properties with each other.  Essentially, we initialize a candidate seed with two nodes that share a similarity in a (random) one of the chosen views, and try to incrementally add other nodes connected to an existing seed node in views where the constraint is yet unsatisfied.  If unable to do so after a number of attempts, we start fresh.  The process is stochastic, and thus enables high seed diversity and diverse suspicious MVSG discovery.



\begin{algorithm}[!t]
\scriptsize
\caption{{\sc \method}\label{alg:main}} 
\begin{algorithmic}[1]
\Require MVG $\set{G}$ ($N$ nodes, $K$ views, $\vec{C}$ masses), constraint $z \leq K$
\State $\vec{k} \gets \seedviews(\set{G}, z)$ \Comment{\text{\color{blue} choose $z$ views}}
\State $\set{X}_{\vec{k}} \gets \seednodes(\set{G}, \vec{k})$ \Comment{\text{\color{blue} $n$ nodes, $\vec{c}$ masses}}
\State $S \gets f \left( n, \vec{c}, N, \vec{C} \right)$ \Comment{\text{\color{blue} compute suspiciousness metric}}
\Do  \Comment{\text{\color{blue} alternating optimization}}
\State $S' \gets S$
\State $\vec{k} \gets \updateviews(\set{G}, \set{X})$ \Comment{\text{\color{blue} revise view set}}
\State $\set{X}_{\scriptstyle \vec{k}} \gets \updatenodes(\set{G}, \set{X}_{\scriptstyle \vec{k}})$ \Comment{\text{\color{blue} revise node set}}
\State $S \gets f \left(n, \vec{c}, N, \vec{C} \right)$
\doWhile{$S > S'$} \Comment{\text{\color{blue} repeat until $S$ converges}}
\State \Return $\left( \set{X}_{\vec{k}}, \, S \right)$ 
\end{algorithmic}
\end{algorithm} 

\begin{algorithm}[!t]
\scriptsize
\caption{{\sc GreedySeed}\label{alg:greedyseed}} 
\begin{algorithmic}[1]
\Require MVG $\set{G}$ ($N$ nodes, $K$ views, $\Rho$ dens.), views $\vec{k}$

\State define $\shuffle(S)$: return S in random order
\State define $\choice(S, r)$: return $r$ random elements from S

\State $\set{V} \gets \left\{ i \mid \mathbbm{1}(k_i) \right\}$ \Comment{\text{\color{blue} chosen view set}}
\State $\fun{H}_{i}^{ve}(a) \gets a \Rightarrow \fun{A}_i^{-1}(a) \; \forall a \in \set{A}, i \in \set{V} $  \Comment{\text{\color{blue} value-to-entity hashmap}}
\State $\fun{H}_{i}^{ev}(a) \gets a \Rightarrow \set{A}_{i}(a)  \; \forall a \in \set{G}, \; i \in \set{V} $  \Comment{\text{\color{blue} entity-to-value hashmap}}
\State  $i \gets \choice(\set{V}, 1)$ \Comment{\text{\color{blue} choose a view}} \label{alg:gs-retry}
\State  $a \gets \choice(\left\{ a \mid |\fun{H}_{i}^{ve}(a) \;| \; \geq 2 \right\}, 1)$ \Comment{\text{\color{blue} choose a shared value}}
\State  $\set{X} \gets \choice(\fun{H}_{i}^{ve}(a), \; 2)$  \Comment{\text{\color{blue} initialize seed with similar entities}}

\For {view $i \in \shuffle(\set{V})$}
    \State  $t \gets 0$ \Comment{\text{\color{blue} keep track of attempts}}
    \While { ($\rho_i < \Rho_i$ and t < 20) } \Comment{\text{\color{blue} try to satisfy constraint}}
            \State $e_1 \in \shuffle(\set{X}, 1) $ \Comment{\text{\color{blue} choose entity already in $\set{X}$}}
            \State $a \gets \choice(\fun{H}_{i}^{ev}(e_1), 1)$ \Comment{\text{\color{blue} choose a shared value}}
            \State $e_2 \gets \choice(\fun{H}_{i}^{ve}(a), 1)$ \Comment{\text{\color{blue} choose a similar entity}}
            \State $\set{X} \gets \set{X} \cup e_2$ \Comment{\text{\color{blue} grow seed}}
            \State $t \gets t + 1$ 
    \EndWhile
    \If {($\rho_i < \Rho_i$)}
        \State \Goto{alg:gs-retry} \Comment{\text{\color{blue} start fresh if constraint not yet met}}
    \EndIf
\EndFor
\State \Return $\set{X}_{\vec{k}}$
\end{algorithmic}
\end{algorithm}

\subsection{Implementation}
\label{sec:implementation}
We describe several considerations in practical implementation, which improve result relevance and computational efficiency.

\textbf{Result quality.} 
The quality of resulting MVSGs depends on the degree to which they accurately reflect truly suspicious synchronous behaviors. Not all synchronicity is equally suspicious, such as for attributes from free-form user text input. Consider an attribute ``File Name'' for user uploads. An edge between two users which share an uploaded file named ``Cheap-gucci-ad.jpg'' is, intuitively, more suspiciouss than if the file was named ``Test'' (which is a common placeholder that unrelated users are likely to use). To avoid considering the latter case, which is a type of \emph{spurious} synchronicity, we use techniques from natural language processing to carefully weight similarities in the MVG construction. Firstly, we enable value blacklisting for highly common stop-words (e.g. \emph{Test}).  Overlap on a stopworded value produces 0 mass (no penalty).  Next, we weight edges for other value overlaps according to the common TF-IDF NLP technique, which in this case is best characterized as a value's \emph{ inverse entity frequency} (IEF) \cite{aggarwal2012mining}. We define the IEF for value $v$ in view $i$ as: $ief(v_i) = \left({N} / \log\left(1 + |\set{A}_i^{-1}\left(v_i \right)|\right)\right)^2$, which significantly discounts common values. We let $w_i^{(a,b)} = \sum_{v_i \in \set{A}(a) \cap \set{A}(b)} ief(v_i)$, so that edge weight between two nodes (entities) depends on both number and rarity of shared values.  

\textbf{Computational efficiency.}  Next, we discuss several optimizations to improve the speed of suspicious MVSG discovery.  Firstly, we observe that \method is trivially parallelizable. In our implementation, we are able to run thousands of seed generation and expansion processes simultaneously by running in a multi-threaded setting, and aggregating a ranked list afterwards.  Another notable performance optimization involves the computation of view masses $\vec{c}$ in $\updatenodes$, which is the slowest part of \method.  A na\"{i}ve approach to measure synchronicity given $n$ nodes is to quadratically evaluate pairwise similarity, which is highly inefficient. We instead observe that it is possible to compute $c_i$ by cheaply maintaining the number of value frequencies in a view-specific hashmap $\fun{J}_i$, such that $\fun{J}_{i}(v) = |\left\{e \in \set{X} \mid v \in \fun{A}_{i}(e) \right\}|$. Specifically, $\fun{J}_{i}\left(v\right)$ indicates that ${\fun{J}_{i}\left(v\right)}^2 - \fun{J}_{i}\left(v\right)$ similarities exist in the subgraph view on the value $v$, and since each of them contribute $ief(v)$ weight, we can write the total mass as $c_i = \sum_{v_i \in \set{A}_i} ief(v_i)\left( \fun{J}_{i}\left(v_i\right) \right)^2 - \fun{J}_{i}\left(v_i\right)$.  This approach makes it possible to calculate view mass in linear time with respect to the number of subgraph nodes, which drastically improves efficiency. Furthermore, the space requirements are, for each view, a hashmap of value frequencies as well as a table of unique attribute values for each entity, a compressed representation compared to tensor-based approaches~\cite{prakash2010eigenspokes}. The time and space complexity requirements are more formally described in Section VI-C.  
\section{Evaluation}
\label{sec:eval}

Our experiments aim to answer the following questions.

\begin{compactitem}
\item \textbf{Q1. Detection performance:} How effective is \method in detecting suspicious behaviors in real and simulated settings? 
How does it perform in comparison to prior works?

\item \textbf{Q2. Efficiency:} How do \seedMethod and \method scale theoretically and empirically on large datasets?
\end{compactitem}

\subsection{Datasets}
We use both real and simulated settings in evaluation.

\textbf{Snapchat advertiser platform.} We use a dataset of advertiser organization accounts on Snapchat, a leading social media platform. Our data consists of $N=230K$ organizations created on Snapchat from 2016-2019. Each organization is associated with several single-value (e.g. contact email) and multi-value attributes (e.g names of ad campaigns). All in all, we use $K=12$ attributes deigned most relevant to suspicious behavior detection, made available to us by Snapchat's Ad Review team, whom we partner with for domain expert analysis and investigation:

\begin{compactitem}
\item \textbf{Account details (6):}  Organization name, e-mails, contact name, zip-code, login IP addresses, browser user agents. 
\item \textbf{Advertisement content (6):} Ad campaign name (e.g. \emph{Superbowl Website}), media asset hashes, campaign headlines (e.g. \emph{90\% off glasses!}), brand name (e.g. \emph{GAP}), iOS/Android app identifiers, and external URLs. 
\end{compactitem}

Based on information from the domain experts, we pruned 1.7K organizations from the original data, primarily including advertisement agencies and known affiliate networks which can have high levels of synchrony (often marketing for the same subsets of companies), and limited our focus to the remaining $228K$ organizations.

\textbf{Simulated settings.} We additionally considered several simulated attack settings, based on realistic attacker assumptions.  Our intuition is that attack nodes will have higher propensity to share attribute values than normal ones, and may target varying attributes and have varying sophistication. Our simulation parameters include $N$ (entity count) and $K$ (total attribute types), $\vec{u}$ (length-$K$, cardinalities of attribute value spaces), $n,k$  (entities and attributes per attack), $c$ (number of attacks), $\lambda$ (value count per normal entity), and $\tau$ (attack temperature, s.t. attackers choose from a restricted attribute space with cardinalities $\tau^{-1}\vec{u}$.  Together, these parameters can express a wide variety of attack types.  Our specific attack procedure is:
\begin{compactenum}
\item Initialize $N$ normal entities. For each attr. $i$, draw $Poisson(\lambda)$ specific values uniformly over $[1,u_i]$.
\item Conduct an attack, by randomly sampling $n$ entities and $k$ attributes.  For each attr. $i$, draw $Poisson(2\lambda)$ specific attr. values uniformly over $[1, \tau^{-1}u_i]$.  Repeat $c$ times.
\end{compactenum}
Unless otherwise mentioned, for each scenario we fix parameters as $N=500$ nodes, $K=10$ views, $u_i = 50i$ attr. cardinality, $n=50$ nodes per attack, $k=3$ views per attack, $\lambda=5$ mean values drawn per node and attribute, and $\tau=10$ temperature.

\subsection{Detection performance}

\begin{figure}
    \centering
    \includegraphics[width=0.98\linewidth]{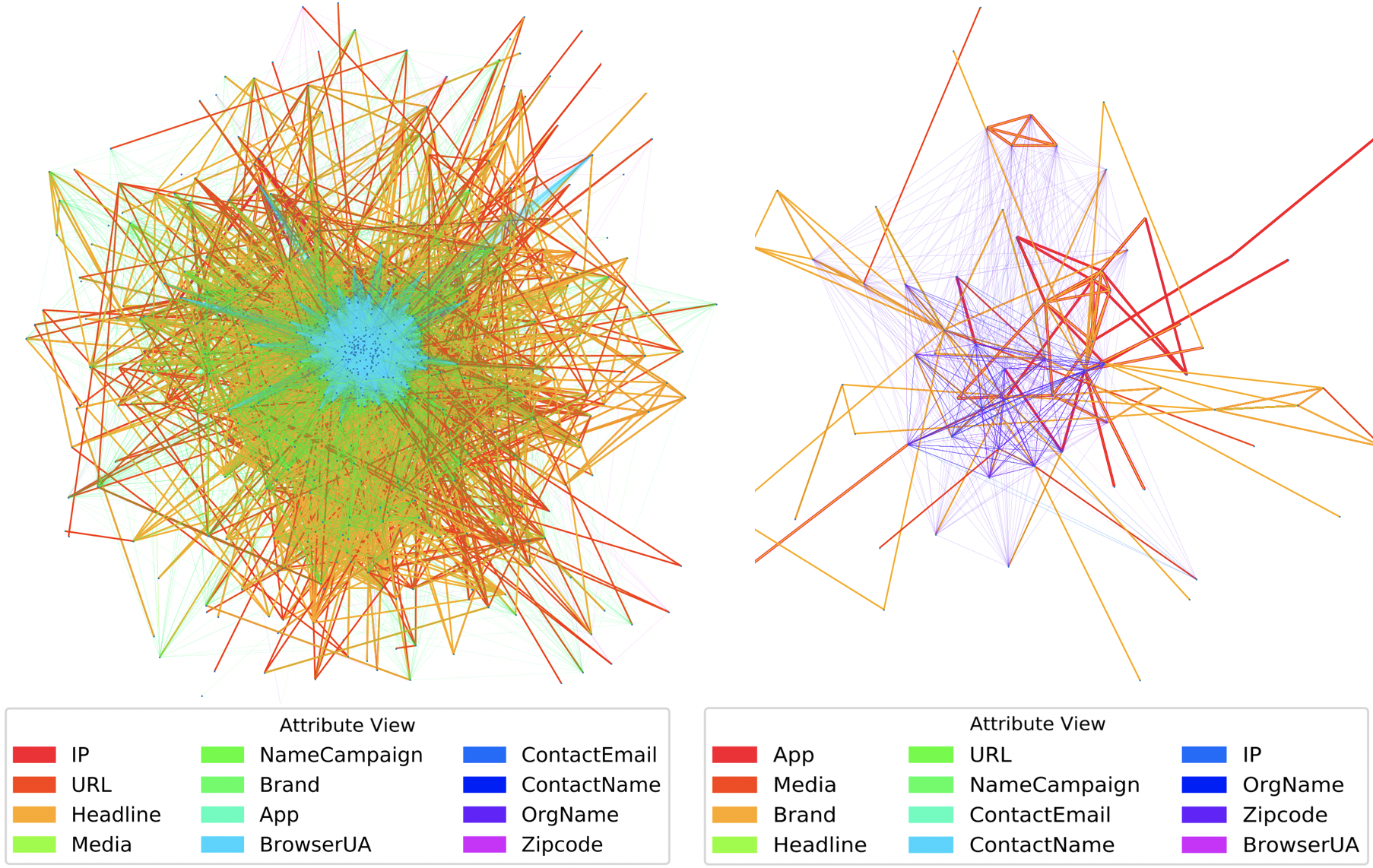}
    \caption{\method detects both blatant (left) and more stealthy (right) fraudsters on Snapchat's ad platform. Attribute Views in each Legend are sorted by suspiciousness (Red = Highest, Purple = Lowest). 
    }
    \label{fig:eval_viz}
\end{figure}

\begin{figure*}[!t]
\centering
\subcaptionbox{High attack sync.}
{
\includegraphics[width=0.18\textwidth]{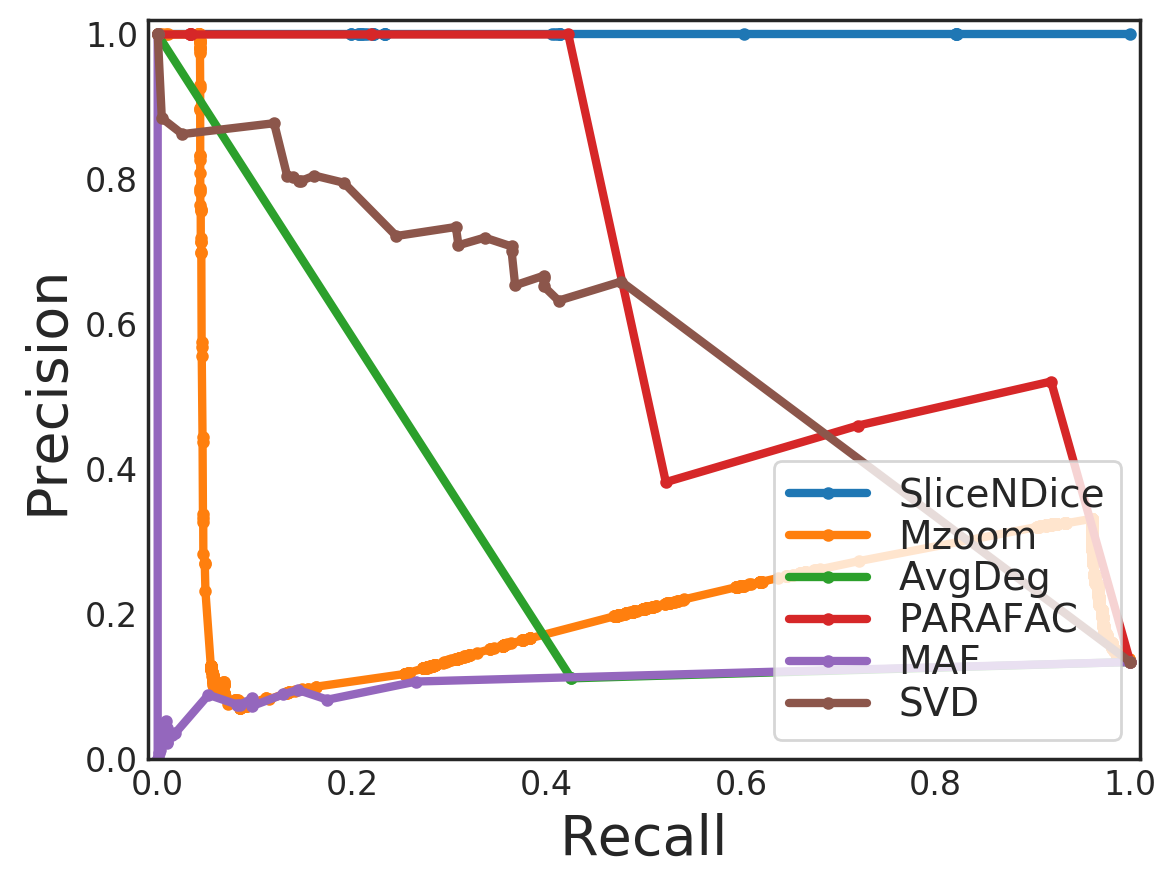}
}
\subcaptionbox{Low attack sync.}
{
\includegraphics[width=0.18\textwidth]{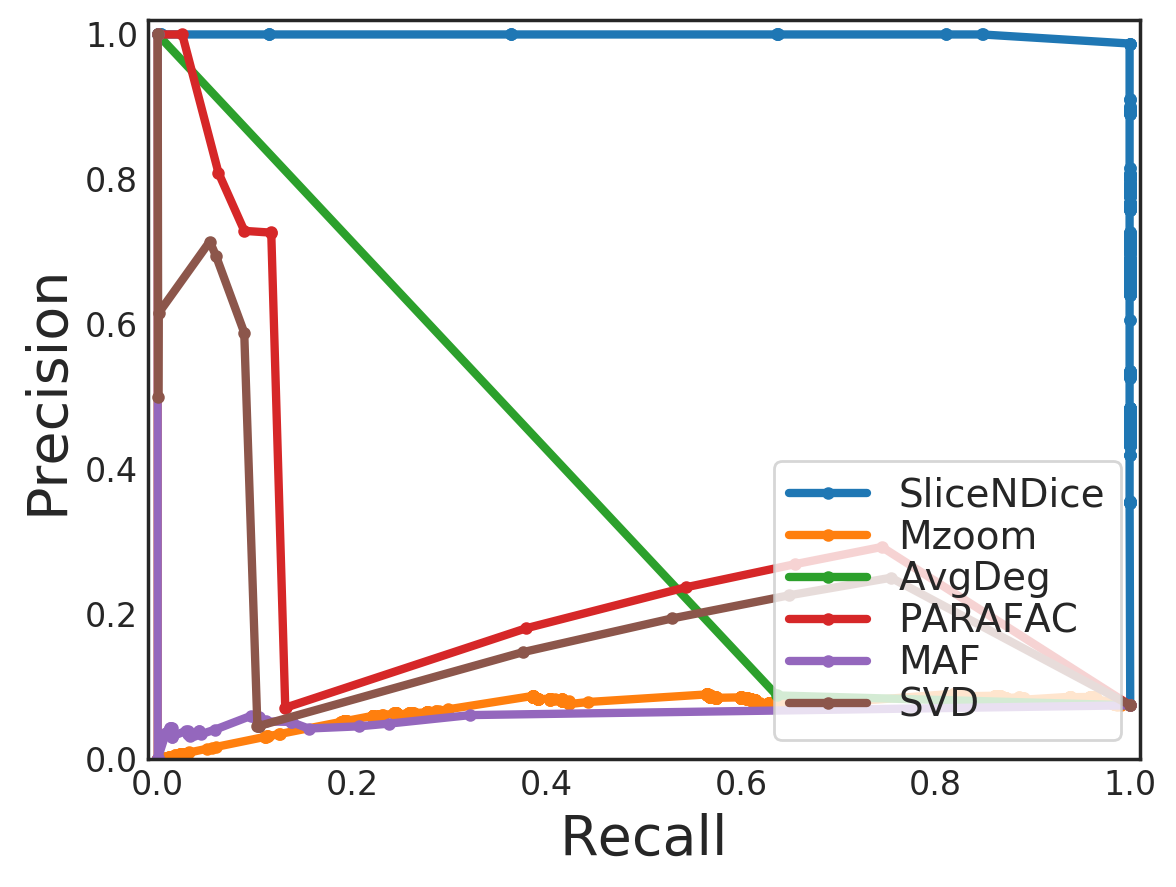}
}
\subcaptionbox{High-signal attacks}
{
\includegraphics[width=0.18\textwidth]{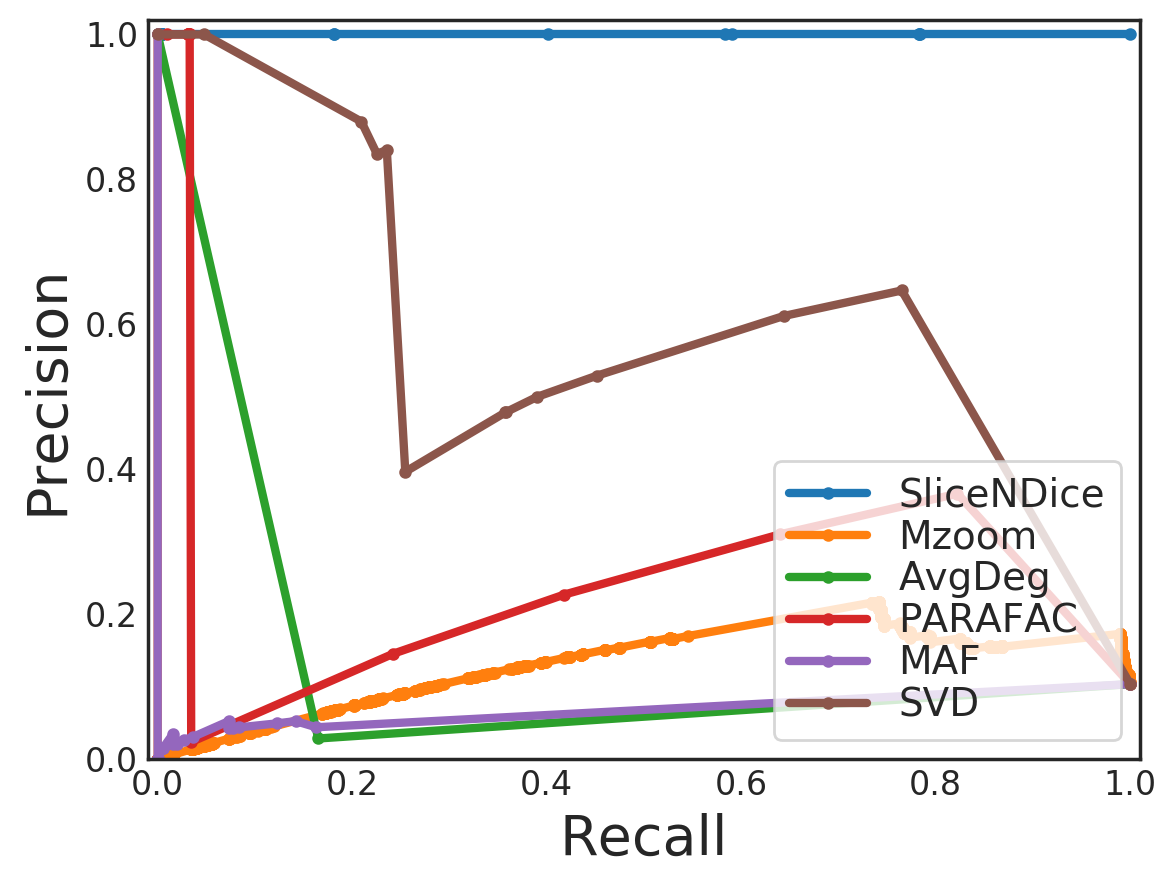}
}
\subcaptionbox{Low-signal attacks}
{
\includegraphics[width=0.18\textwidth]{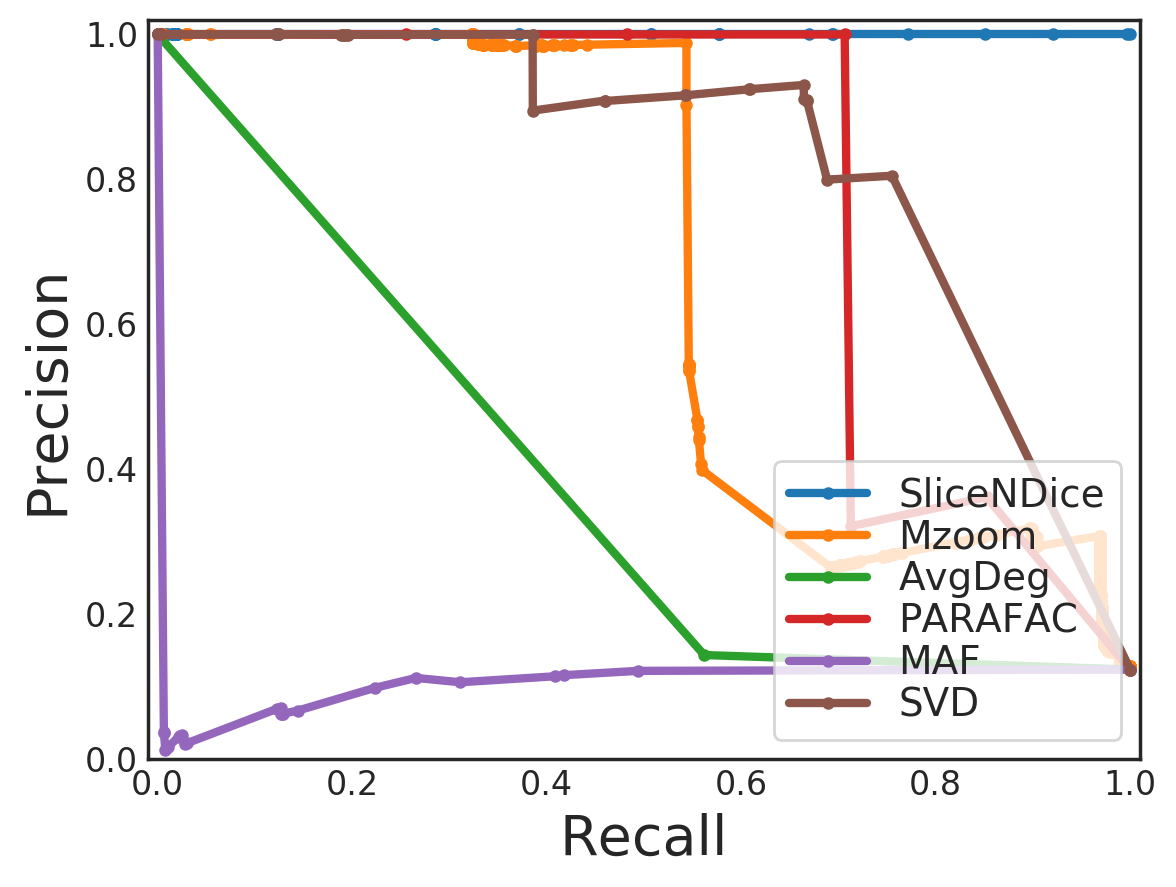}
}
\subcaptionbox{High-dim attacks}
{
\includegraphics[width=0.18\textwidth]{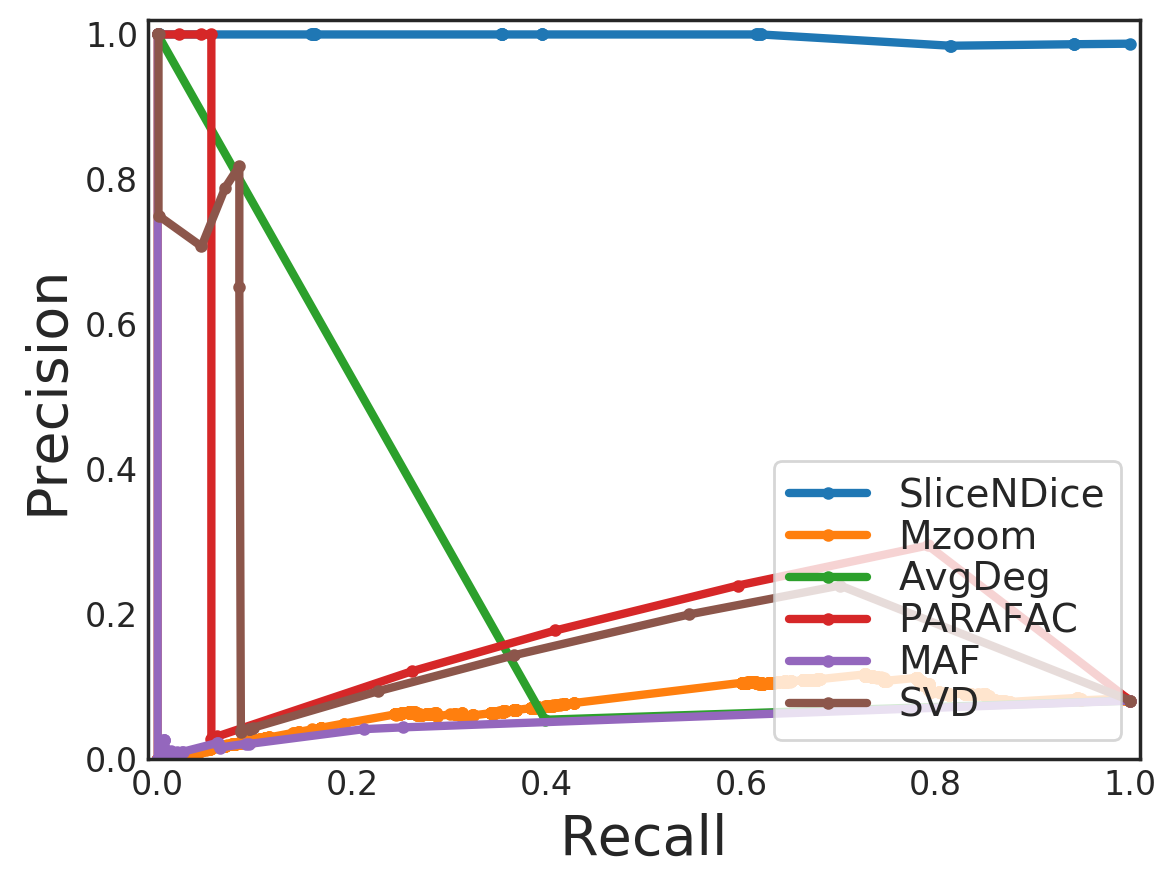}
}
\caption{\method (blue, near the top) consistently achieves extremely high precision/recall on various, realistic attack settings, despite changing attack complexity.  Other methods fail due to their inability to respect differences between attributes/views and overemphasis on global density. \label{fig:synth-eval-attacks}}
\end{figure*}

We discuss performance in both settings.

\textbf{Snapchat advertiser platform.} 
We deployed \method on Google Cloud compute engine instances with high vCPU count to enable effective parallelization across many seeds over 2 days.  During this time, we yielded a total of $6,050$ suspected entity group behaviors within Snapchat data. We fixed $z = 3$ for this run, based on two reasonS: (a) input from our Ad Review partners, who noted that too many apparent features hindered the review process via information overload, and (b) $z=3$ balances the two extremes of low $z$ preventing discovery of more distributed, stealthier fraud which can only be uncovered by overlaying multiple graph views, and high $z$ hurting interpretability and increasing difficulty to satisfy density constraints. Finally, after mining and ranking many MVSGs, we aggregate and filter results for expert review  by pruning ``redundant'' MVSGs which covering the same set of nodes; we use a Jaccard similarity threshold $\eta=0.05$ to determine overlap.
We note that the number of distinct blocks was fairly robust to $\eta$; for example, a more liberal threshold of $\eta = .25$ yielded $636$ distinct groups. We chose $\eta$ conservatively in order to minimize redundancy in results shared with the Ad Review team. We evaluated our methods both quantitatively and qualitatively in coordination with them. We were not able to compare with other suspicious group mining methods which rely on block decomposition or search~\cite{papalexakis2013more, prakash2010eigenspokes, charikar2000greedy, shin2016mzoom, jiang2016spotting} as these are highly dense matrices/tensors which are too large to manifest for this scale.


\textit{Quantitative Evaluation:} We sort the suspected sub-graphs in descending order and submitted those in the top 50 groups to the Ad Review team for in-depth, manual validation of the constituent $2736$ organizations.  We provided the domain experts with 3 assets to aid evaluation: (a) network visualizations like those in Figure \ref{fig:eval_viz}, (b) mappings between the highest frequency occurring attribute values for each attribute, and all organization entities associated with those attributes, and (c) entity-level indices listing all instances of attribute synchrony and associated values, for each organization involved per group. While the exact review process and signals used are masked for security reasons, the review process generally consisted of evaluating individual advertiser accounts in each cluster, and determining their suspiciousness based on previous failed payments, previously submitted spammy or fraudulent-looking ads, similarity with previously discovered advertiser abuse vectors, and other proprietary signals.  From surveying these organizations, reviewers found that an overwhelming $2435$ were connected to fraudulent behaviors that violated Snapchat's advertiser platform Terms of Service, resulting in an organization-level precision of $89\%$.  The organizations spanned diverse behaviors, including individuals who created multiple accounts to making multiple (similar) accounts to generate impressions before defaulting early on their spending budget (avoiding payment), those selling counterfeit goods or running e-commerce scams, fraudulent surveys with falsely promised rewards, and more.  Intuitively, the diversity of these behaviors supports the flexibility of our problem formulation; despite \method's agnosticism to the multiple types of abuse, it was able to uncover them automatically using a shared paradigm.


\textit{Qualitative Evaluation:} We inspect two instances of advertiser fraud, shown in Figure \ref{fig:eval_viz}, with help from the Ad Review team.  Although \method selected groups based on suspiciousness according to only the top ($z=3$) views, we show the composites of similarities for all 12 attributes to illustrate the differences between the two attacks.  On the left, we show the first case of \textit{blatant} fraud across $500$ organizations, which are connected across all of the considered attributes to varying degrees.  Many of these organizations were accessed using a common subset of IP addresses, with the cluster near the center all sharing browser user agents. Upon further inspection, we find that these accounts are engaging in e-commerce fraud, and link to a series of URLs that follow the pattern \emph{contact-usXX.[redacted].com}, where $XX$ ranges from 01-27. Multiple accounts link to these common URLs and share common advertisement headlines, which combined with shared login activities ranks the group very highly according to our metric. Our second case study is of a smaller ring of $70$ organizations, which could be considered \textit{stealthy} fraud. These organizations appear to market a line of Tarot card and Horoscope applications.  We noticed attempts taken by the fraudsters to cloak this ring from detection: no app identifier appears more than 4 times across the dataset, but most of the organizations have identifiers associating to similar app variants.  This discovery illustrates \method's ability to ``string together'' shared properties across multiple attribute views to discover otherwise hard-to-discern behaviors.  

\textbf{Simulated settings.} We consider detection performance on simulations matching several realistic scenarios.
\begin{compactenum}
\item \textbf{High attacker synchrony:} Default settings; attacks sample attributes from $1/10$th of associated attribute spaces, making them much denser than the norm.
\item \textbf{Low attacker synchrony:} $\tau=2$. Attribute spaces are restricted to $1/2$ and thus much harder to detect.
\item \textbf{High-signal attribute attacks:} Attack views are sampled with weights $\propto \vec{u}$ (more likely to land in sparse views).
\item \textbf{Low-signal attribute attacks:} Attack views are sampled with weights $\propto 1/\vec{u}$ (more likely to land in dense views).  
\item \textbf{Attacks in high dimensionality:} $K = 30$.  Attacks are highly distributed in $3\times$ higher dimensionality.
\end{compactenum}
Our detection task is to classify each attribute overlap as suspicious or non-suspicious; thus, we aim to label each nonzero entry (``behavior'') in the resulting $N \times N \times K$ tensor.  We evaluate \method's precision/recall performance along with several state-of-the-art group detection approaches in this task.  For each method, we penalize each behavior in a discovered block using the block score given by that method, and sum results over multiple detected blocks.  The intuition is that a good detector will penalize behaviors associated with attack entities and attributes more highly.  We compare against PARAFAC decomposition \cite{mao2014malspot}, MultiAspectForensics (MAF) \cite{maruhashi2011multiaspectforensics}, Mzoom \cite{shin2016mzoom}, SVD \cite{prakash2010eigenspokes}, and AvgDeg \cite{charikar2000greedy}.  For SVD and AvgDeg which only operate on single graph views, we use the aggregated view adjacency matrix, whereas for others we use the adjacency tensor.  \method utilizes the compressed representation discussed in Section \ref{sec:implementation}.  For SVD, we use singular value ($\singval$) as the block score.  For PARAFAC and MAF, we use the block norm which is effectively the higher order $\singval$.  AvgDeg uses the average subgraph degree (\avgdeg), and Mzoom measures mass suspiciousness under the Poisson assumption (\cssusp). Section \ref{sec:repro} gives further details regarding baseline comparison.

Figure \ref{fig:synth-eval-attacks} shows precision/recall curves for all 5 attack scenarios; note that \method significantly outperforms competitors in all cases, often maintaining over 90\% precision while making limited false positives.  Alternatives quickly drop to low precision for substantial recall, due to their inability to both (a) find and score groups while accounting for differences across attributes (Axiom \ref{axm:crossview}), and (b) correctly discern the suspicious from non-suspicious views, even when the right subgraph is discovered.  In practice, this allow attributes with low signal-to-noise ratio and higher natural density to overpower suspicious attacks which occur in less dense views.

\subsection{Efficiency}

We consider efficiency of both \method, as well as our seeding algorithm, \seedMethod.  The time complexity of \seedviews is trivially $\set{O}(z)$, as it involves only choosing $z$ random views.  We can write the complexity of any suitable method for \seednodes loosely as $\set{O}(z \bar t)$, given $z$ views and $\bar t$ iterations to satisfy each density constraint $\rho_i > \Rho_i$ successively.  However, in practice, this notion of $\bar t$ is ill-defined and can adversely affect performance.  We explore practical performance in average time to find suitable seeds which satisfy constraints, for both random seeding and our \seedMethod: Figure \ref{fig:scal-seeding} shows that our \seedMethod finds seeds $100-1000\times$ faster than random seeding on real data; note the log scale.  Random seeding struggles significantly in meeting constraints as the number of views increases, further widening the performance gap between the methods.  Each iteration of \updatenodes is $\set{O}(NK\bar A)$ given $N$ nodes, $K$ views, and $\bar A$ values per attribute.  \updateviews is simply $\set{O}(z \log z)$, as \updatenodes already updates  subgraph masses $\vec{c}$ across all views.  The runtime in practice is dominated by \updatenodes; assuming $\bar T$ iterations per MVSG, the overall \method time complexity is $\set{O}(NK\bar A \bar T)$.  Figures \ref{fig:scal-nodes}-\ref{fig:scal-iter} show that \method scales linearly with respect to both entities (over fixed iterations) and iterations (over fixed entities).  Moreover, the overall space complexity is $\mathcal{O}(K \bar A)$, due to the compact attribute-oriented data representation described in Section \ref{sec:implementation}.  Note that alternative group mining methods which rely on block decomposition or search were infeasible to run on real data, due to the sheer attribute sharing density in the tensor representation, which grows quadratically with each shared  value.

\begin{figure}[!t]
\subcaptionbox{Time to seed \label{fig:scal-seeding}}
{
\includegraphics[width=0.29\linewidth]{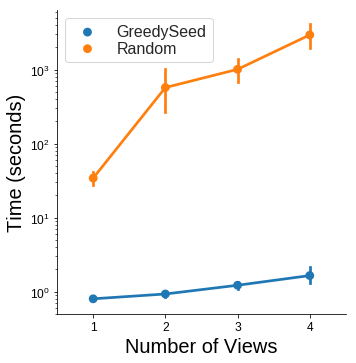}
}
\subcaptionbox{Time vs. \# entities \label{fig:scal-nodes}}
{
\includegraphics[width=0.29\linewidth]{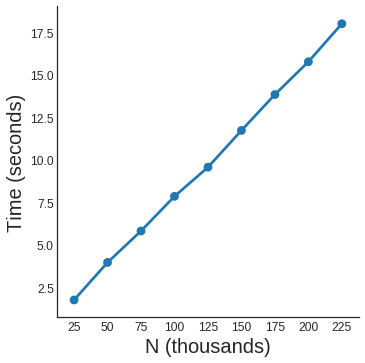}
}
\subcaptionbox{Time vs. \linebreak\# iterations \label{fig:scal-iter}}
{
\includegraphics[width=0.29\linewidth]{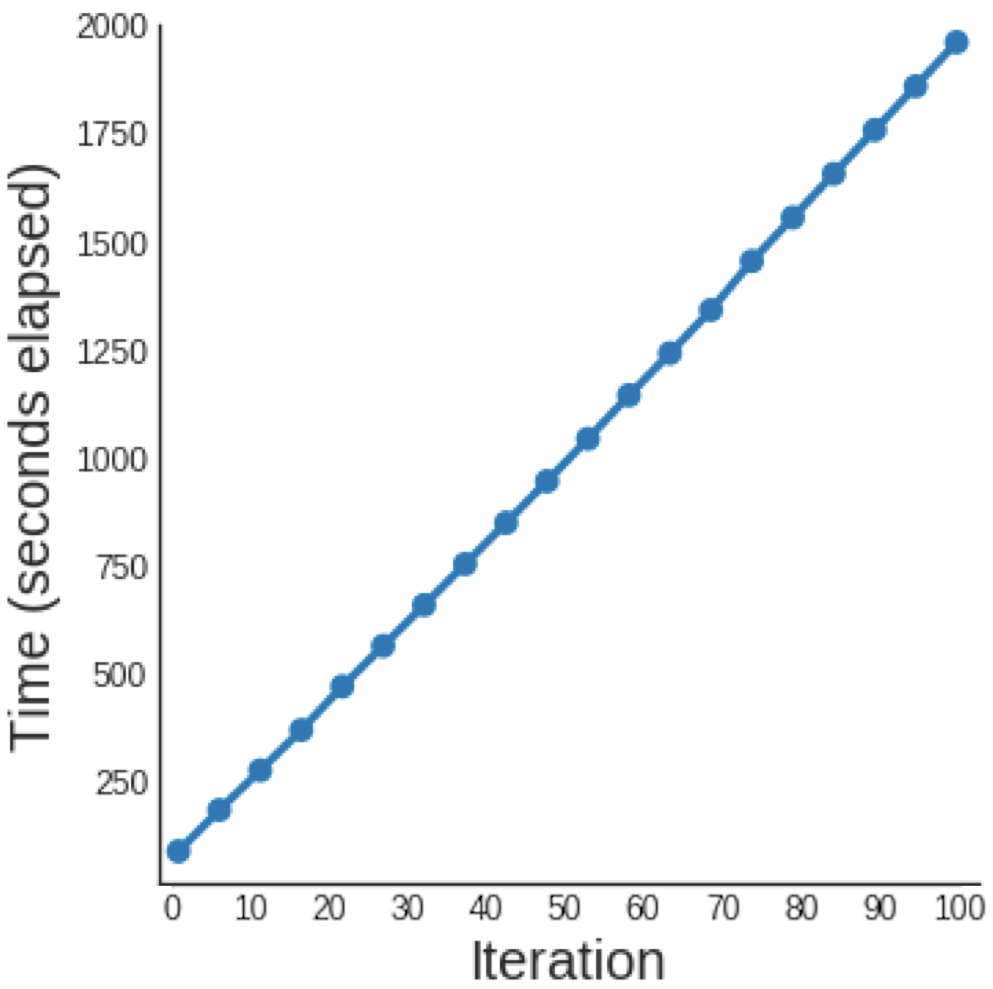}
}
\caption{Our \seedMethod finds suitable seeds $100-1000\times$ faster than random seeding (a).  Moreover, \method scales linearly in number of entities (b) and iterations (c).}
\end{figure}

\section{Conclusion}

In this work, we tackled the problem of scoring and discovering suspicious behavior in multi-attribute entity data.  Our work makes several notable contributions.  Firstly, we construe this data as a multi-view graph, and formulate this task in terms of  mining \emph{suspiciously dense multi-view subgraphs} (MVSGs).  We next propose and formalize intuitive desiderata (Axioms \ref{axm:mass}-\ref{axm:crossview}) that MVSG scoring metrics should obey to match human intuition, and designed a novel suspiciousness metric based on the proposed \mvere model which satisfies these metrics, unlike alternatives.  Next, we proposed the \method algorithm, which enables scalable ranking and discovery of MVSGs suspicious according to our metric, and discussed practical implementation details which help result relevance and computational efficiency.  Finally, we demonstrated strong empirical results, including experiments on real data from the Snapchat advertiser platform where we achieved 89\% precision over 2.7K organizations and uncovered numerous fraudulent advertiser rings, consistently high precision/recall (over 97\%) and outperformance of several state-of-the-art group mining algorithms, and linear scalability.
\bibliographystyle{IEEEtran}
\bibliography{paper.bib}

\clearpage
\section{Reproducibility}
\label{sec:repro}

\subsection{Satisfaction of Axioms}



Below, we show that our suspiciousness metric $f$ (and $\hat{f}$) satisfies Axioms \ref{axm:mass}-\ref{axm:crossview}, and posited in Lemma \ref{lem:adherence}.  In each case, we consider how $f$ or $\hat{f}$ changes as individual properties vary; since they are simply reparameterizations of one another, we suffice it to prove adherence to each axiom using the more convenient parameterization.  We reproduce the axioms with each proof below for reader convenience.

\begin{axiomnonumber}[Mass]
Given two subgraphs $\set{X}, \set{X}' \subseteq \set{G}$ with the same volume, and same mass in all except one view s.t. $c_i > c_i'$,  $\set{X}$ is more suspicious. Formally,
\begin{displaymath}
c_i > c_i' \Rightarrow f(n,\vec{c},N,\vec{C}) > f(n,\vec{c}',N,\vec{C})
\end{displaymath} 
\end{axiomnonumber}
\begin{proof}[Proof of Axiom \ref{axm:mass} (Mass)]
\begin{displaymath} 
\frac{\partial{f_i}}{\partial c_i} = \frac{V}{C_i}-\frac{v-1}{c_i}
\end{displaymath} 
Because we are operating under the constraint $P_i < p_i$, then $P_i^{-1} > p_i^{-1}$ which implies that $\frac{V}{C_i} > \frac{v}{c_i} > \frac{v-1}{c_i} $. Therefore, $\frac{\partial{f_i}}{\partial c_i} > 0$ and so as mass increases, suspiciousness increases. 
\end{proof}

\begin{axiomnonumber}[Size]
Given two subgraphs $\set{X}, \set{X}' \subseteq \set{G}$ with same densities $\vec{p}$, but different volume s.t. $v > v'$,  $\set{X}$ is more suspicious. Formally,
\begin{displaymath}
v > v' \Rightarrow \hat{f}(n,\vec{p},N,\vec{P}) > \hat{f}(n',\vec{p},N,\vec{P})
\end{displaymath} 
\end{axiomnonumber}
\begin{proof}[Proof of Axiom \ref{axm:size} (Size)]
The derivative of $\hat{f}_i$ with respect to subgraph volume is:
\begin{displaymath} 
\frac{\partial\hat{f}_i}{\partial v} =
\log  \Rho_i  - \log \rho_i  - 1 + \frac{\rho_i}{\Rho_i} = \frac{\rho_i}{\Rho_i} - \log \frac{\rho_i}{\Rho_i}  - 1
\end{displaymath} 
This implies that $\frac{\partial\hat{f}_i}{\partial v} > 0$, because  $x - \log x > 1$ always holds when $x > 1$ (which holds given $\rho_i > \rho_i$).  Therefore, suspiciousness increases as volume increases.
\end{proof}

\begin{axiomnonumber}[Contrast] 
Given two subgraphs $\set{X} \subseteq \set{G}$, $\set{X}' \subseteq \set{G}'$ with same masses $\vec{c}$ and size $v$, s.t. $\set{G}$ and $\set{G}'$ have the same density in all except one view s.t. $\Rho_i < \Rho_i'$,  $\set{X}$ is more suspicious.  Formally,
\begin{displaymath}
P_i < P_i' \Rightarrow \hat{f}(n,\vec{p},N,\vec{P}) > \hat{f}(n,\vec{p},N,\vec{P}')
\end{displaymath} 
\end{axiomnonumber}
\begin{proof}[Proof of Axiom \ref{axm:contrast} (Contrast)]
The derivative of $\hat{f}_i$ suspiciousness with respect to density $\Rho_i$ is:
\begin{displaymath}
\frac{\partial \hat{f}_i}{\partial \Rho_i} =
v \log \Rho_i + v \frac{\rho_i}{\Rho_i} = \frac{v}{\Rho_i} - v \frac{\rho_i}{{\Rho_i}^2} =
\frac{v}{\Rho_i} \left( 1 - \frac{\rho_i}{\Rho_i} \right) 
\end{displaymath}
This implies that $\frac{\partial \hat{f}_i}{\partial \Rho_i} < 0$, because $v/\Rho_i > 0$ and $\rho_i > \Rho_i$.  Thus, as graph density increases, suspiciousness decreases.  Alternatively, as sparsity increases, suspiciousness increases.
\end{proof}

\begin{axiomnonumber}[Concentration] 
Given two subgraphs $\set{X}, \set{X}' \subseteq \set{G}$ with same masses $\vec{c}$ but different volume s.t. $v < v'$, $\set{X}$ is more suspicious.  Formally,
\begin{displaymath}
v < v' \Rightarrow f(n,\vec{c},N,\vec{C}) > f(n',\vec{c},N,\vec{C})
\end{displaymath} 
\end{axiomnonumber}
\begin{proof}[Proof of Axiom \ref{axm:concentration} (Concentration)]
The derivative of view $i$'s contribution suspiciousness (parameterized by mass) and w.r.t the volume is:
\begin{align*} 
\frac{\partial{f_i}}{\partial v} &= 
\log\frac{C_i}{V} + \log v + 1 - 1 - \frac{1}{v} - \log c_i \\
&=\log P_i + \log \rho_i^{-1} - \frac{1}{v} = \log\frac{P_i}{\rho_i} - \frac{1}{v} = -\log\frac{\rho_i}{P_i} - \frac{1}{v} 
\end{align*}
$-\log\frac{\rho_i}{P_i} <= -1$ because $\frac{\rho_i}{P_i} > 1$, so therefore $\frac{\partial{f_i}}{\partial v} <  0$. Therefore, for a fixed sub-graph mass $c_i$, suspiciousness decreases as volume increases. 
\end{proof}

\begin{axiomnonumber}[Cross-view Distribution] 
Given two subgraphs $\set{X}, \set{X}' \subseteq \set{G}$ with same volume $v$ and same mass in all except two views $i, j$ with densities $\Rho_i < \Rho_j$ s.t. $\set{X}$ has $c_i = M, c_j = m$ and $\set{X}'$ has $c_i = m, c_j=M$ and $M>m$, $\set{X}$ is more suspicious. Formally,
\begin{gather*}
\Rho_i < \Rho_j \; \wedge \; c_i > c_i' \; \wedge \; c_j < c_j' \; \wedge \; c_i+c_j=c_i'+c_j' \Rightarrow \\ f(n,\vec{c},N,\vec{C}) > f(n,\vec{c}',N,\vec{C}) 
\end{gather*}
\end{axiomnonumber}

\begin{proof}[Proof of Axiom \ref{axm:crossview} (Cross-View Distribution)]

Assume that view $i$ is sparser than view $j$ ($\Rho_i < \Rho_k$), and we are considering a sub-graph which has identical mass in both views ($c_i = c_j$). Adding mass to the sparser view $i$ will increase suspiciousness more than adding the same amount of mass to the denser view $j$ because:

\begin{align*} 
\frac{\partial{f_i}}{\partial v} - \frac{\partial{f_j}}{\partial v} &= \left(\frac{V}{C_i}-\frac{v-1}{c_i}\right) - \left(\frac{V}{C_j}-\frac{v-1}{c_j}\right) \\ &=  P_i^{-1} -  P_j^{-1} > 0
\end{align*} 
\end{proof}

\subsection{Baseline Implementations for Comparison}

We compared against 5 baselines in Section \ref{sec:eval}: PARAFAC \cite{mao2014malspot}, MAF \cite{maruhashi2011multiaspectforensics}, Mzoom \cite{shin2016mzoom}, AvgDeg \cite{charikar2000greedy} and SVD \cite{prakash2010eigenspokes}.  Below, we give background and detail our implementations for these baselines.

\subsubsection{PARAFAC}
PARAFAC \cite{papalexakis2013more} is one of the most common tensor decomposition approaches, and can be seen as the higher-order analog to matrix singular value decomposition. An $F$-rank PARAFAC decomposition aims to approximate a multimodal tensor as a sum of $F$ rank-one factors which, when summed, best reconstruct the tensor according to a Frobenius loss.  In our case, the decomposition produces factor matrices $\mathbf{A}$ of $N \times r$, $\mathbf{B}$ of $N \times r$ and $\mathbf{C}$ of $K \times r$, such that we can write the tensor $\mathbf{\mathscr{T}}$ associated with MVG $\set{G}$ as 
\begin{displaymath}
\mathbf{\mathscr{T}} \approx \sum_{f=1}^F a^f \circ \mathbf b^f \circ c^f
\end{displaymath}
where $a^f$ denotes the $f^{th}$ column vector of $\mathbf{A}$ (analogue for $b^f$ and $c^f$), and $[\vec{a} \circ \vec{b} \circ \vec{c}](i,j,k) = a_i b_j c_k$.  Since PARAFAC gives continuous scores per node in the $\vec{a}$ and $\vec{b}$ vectors for each rank-one factor, we sum them and then use the decision threshold suggested in \cite{shin2016mzoom} ($2.0/\sqrt{N}$) to mark nodes above that threshold as part of the block. We then select the top $k$ views which individually have the highest singular value over the associated submatrix, and penalize the associated entries with the norm of the $f^{th}$ rank-one tensor (closest approximation of \singval in higher dimensions).  We use the Python \texttt{tensorly} library implementation, and $F = 5$ decomposition.

\subsubsection{MAF}
MAF \cite{maruhashi2011multiaspectforensics} also utilizes PARAFAC decomposition, but proposes a different node inclusion method.  Their intuition is to look for the largest ``bands'' of nodes which have similar factor scores, as they are likely clusters.  Since in our case, $\vec{a}$ and $\vec{b}$ both reflect node scores, we sum them to produce the resulting node factor scores.  We then compute a log-spaced histogram over these using 20 bins (as proposed by the authors), and sort the histogram from highest to lowest frequency bins.  We move down this histogram, including nodes in each bin until reaching the 90\% energy criterion or 50\% bin size threshold proposed by the authors.  We mark these nodes as included in the block, and select the top $k$ views with the highest associated submatrix singular values, as for PARAFAC.  We likewise penalize entries in this block using the associated block norm.  We use the Python \texttt{tensorly} library implementation, and specify a $F = 5$ decomposition.

\subsubsection{Mzoom}
Mzoom \cite{shin2016mzoom} proposes a greedy method for dense subtensor mining, which is flexible in handling various block-level suspiciousness metrics.  It has been shown outperform \cite{jiang2016spotting} in terms of discovering blocks which maximize \cssusp metric, and hence we use it over the method proposed in \cite{jiang2016spotting}.  The algorithm works by starting with the original tensor $\mathscr{T}$, and greedily shaves values from modes which maximize the benefit in terms of improving \cssusp.  When a block $\mathscr{T}'$ is found which maximizes \cssusp over the local search, Mzoom prunes it from the overall tensor in order to not repeatedly converge to that block, and repeats the next iteration with $\mathscr{T} - \mathscr{T}'$. As Mzoom does not allow selection of a fixed $k$ views, we only modify their implementation to add the constraint that for any blocks found with $\geq k$ views, we limit output to the first $k$ for fairness.  We penalize entries in each block with that block's \cssusp score.  We use the authors' original implementation, which was written in Java (and available on their website) and specify 500 blocks to be produced (unless the tensor is fully deflated/empty sooner).

\subsubsection{SVD}
SVD \cite{prakash2010eigenspokes}, as discussed in Section \ref{sec:overview-alternative}, is a matrix decomposition method which aims to produce a low-rank optimal reconstruction of $\mathbf{A}$ according to Frobenius norm. In our case, since we aggregate over the $K$ views and produce a resulting $N \times N$ matrix for $\set{G}$, a rank $F$ SVD decomposes the matrix $\mathbf{A} = \mathbf{U \Sigma V}^T$, where $\mathbf{U}, \mathbf{V}$ are $N \times F$, and $\mathbf{Sigma}$ is $F \times F$ and diagonal, containing the singular values.  Loosely, SVD effectively discovers low-rank structures in the form of clusters, such that $\mathbf{U}, \mathbf{V}$ indicate cluster affinity and $\Sigma$ indicates cluster strength or scale.  We take a similar approach as for PARAFAC, in that for each rank $f$, we sum the factor scores $u^f$ and $v^f$ and use a decision threshold of $2.0/\sqrt{1000}$ to mark node membership in the given block.  Then, we rank the individual views according to their respective leading singular values, and choose the top $k$ for inclusion.  We then penalize  entries in the block with the submatrix \singval score.  We use the Python \texttt{scipy} package, and specifically the \emph{svds} method to run sparse SVD, for which we use an $F = 5$ decomposition.

\subsubsection{AvgDeg}
\cite{charikar2000greedy} proposes an algorithm, which we call AvgDeg, for greedily mining dense subgraphs according to the \avgdeg notion of suspiciousness.  The algorithm proposed gives a 2-approximation in terms of returning the maximally dense subgraph, and works by considering a single graph $\set{G}$, and greedily shaving nodes with minimum degree while keeping track of the \avgdeg metric at each iteration.  Upon convergence, the algorithm returns a subgraph which scores highly given \avgdeg.  As for SVD, we consider $\set{G}$ to be aggregated over all $K$ views.  Though the algorithm proposed by the author was initially defined in terms of unweighted graphs, we adapt it to weighted graph setting by shaving nodes greedily with minimum \emph{weighted degree} rather than the adjacent edge count.  Upon discovery of one subgraph $\set{G}'$, we repeat the next iteration with $\set{G} - \set{G}'$.  After finding the nodes for each subgraph, we choose the top $k$ views with the highest individual \avgdeg for inclusion in the block. We request up to 500 blocks, but in practice find that the algorithm converges very quickly because it incorrectly prunes a large number of nodes in earlier iterations.   

\subsubsection{\method}

We use the standard implementation as described in Section \ref{sec:implementation}, evaluating over 500 blocks.  Our implementation is written in Python, and will be made available publicly.

\subsection{Source Code and Datasets}
All source code including calculation of the proposed suspiciousness metric, and our implementation of the \method algorithm is available at \url{http://github.com/hamedn/SliceNDice/}. Our implementation was done using Python 3.7. The algorithm takes as input a CSV, where each row is an entity and each column is an attribute, and returns a ranked list of suspicious groups by row identifier. The code used to generate simulated attacks as discussed in Section \ref{sec:eval} is also included, and allows researchers and practitioners to create their own simulated attack datasets by modifying simulation parameters: $N$ (entity count), $K$ (total attribute types), $\vec{u}$ (length-$K$, cardinalities of attribute value spaces), $n,k$  (entities and attributes per attack), $c$ (number of attacks), $\lambda$ (value count per normal entity), and $\tau$ (attack temperature, s.t. attackers choose from a restricted attribute space with cardinalities $\tau^{-1}\vec{u}$. We also include benchmarking code used to compare the performance of \method against the aforementioned baselines.  Unfortunately, given that the Snapchat advertiser data contains sensitive PII (personally identifiable information), it is not possible for us to release the  dataset publicly. In fact, to the best of our knowledge, no such real-world social datasets are publicly available in the multi-attribute setting. This is because  although the multi-attribute detection setting is a highly common one in many social platforms, attributed data in these settings is typically PII.

\end{document}